\providecommand{\keywords}[1]
{
	\small	
	\textbf{\textit{Keywords---}} #1
}
\theoremstyle{plain}
\newtheorem{theorem}{Theorem}
\theoremstyle{definition}
\newtheorem{definition}{Definition}
\newtheorem{proposition}{Proposition}
\begin{document}

\title{Searching for a source of difference  in graphical models%\thanks{Grants or other notes
%about the article that should go on the front page should be
%placed here. General acknowledgments should be placed at the end of the article.}
}
%\subtitle{Do you have a subtitle?\\ If so, write it here}

%\titlerunning{Short form of title}        % if too long for running head

\author{Vera Djordjilovi\'c $^{1,\ast}$ \& Monica Chiogna$^{2}$\\[4pt]
	{$^{1}$Department of Economics, Ca' Foscari University of Venice, Italy }\\
	$^2$ Department of Statistical Sciences, University of Bologna, Italy}

%\authorrunning{Short form of author list} % if too long for running head

%\date{Received: date / Revised: date}
% The correct dates will be entered by the editor

\maketitle

\begin{abstract}
We look at a two-sample problem within the framework of decomposable graphical models. When the global hypothesis of equality of two distributions is rejected, the interest is usually in localizing the source of difference. Motivated by the idea that diseases can be seen as system perturbations, and by the need to  distinguish between the origin of perturbation and  components affected by the perturbation,  we introduce the concept of a \textit{minimal seed set}, and its graphical counterpart  a \textit{graphical seed set}. They intuitively consist of variables driving the difference between the two conditions.  
We propose  a simple testing procedure,  linear in the number of nodes, to estimate the graphical seed set from data. We illustrate our approach in the context of gene set analysis, where we show that is possible to zoom in on the origin of  perturbation in a gene network.

\keywords{Decomposable graphical models, Strong meta Markov models, Gaussian graphical models, Graphical log-linear models, Two sample problem, Decomposition}
\end{abstract}

\section{Introduction}
\label{intro}
\subsection{Motivation}
The present work is motivated by the problem of identifying the origin of perturbation in gene regulatory networks.  In biological networks, diseases  can be modelled as  perturbations that affect certain targets, which, once perturbed, propagate the   perturbation through network connections \citep{del2010diseases}. In practice, we often collect and compare observations from healthy individuals and observations from patients after the  disease related perturbation has already taken place. On the basis of this comparison,  it is of interest to  identify the site of original perturbation, i.e., the {\it source of difference}, and distinguish it from the elements of the network that were affected through the process of network propagation.  
\subsection{Statement of the problem and some notation}

Let 
$\mathcal{F} = \{P_{\theta}; \theta \in \Theta\},$ $\Theta \subset R^d,$  be a family, parametrized by $\theta$, of probability distributions  for the random vector $X_V$, indexed by a set $V,$  $|V|=p,$ with support $\mathcal{X}_V.$ 
In what follows, to unburden the notation and when no ambiguity can arise, {we adopt the notation of \cite{dawid1993hyper} and, allowing for} a slight abuse of notation, we write $\theta$  instead of $P_\theta$ to denote individual distributions belonging to $\mathcal{F}$.    For $A,B \subseteq V,$ we will further  write $\theta_A$ to denote (the parameters of) the marginal distribution of variables in $A$  and, similarly, $\theta_{A|B}$  to denote a collection of conditional distributions $\left\{\theta_{A \mid X_B = y}, y \in \mathcal{X}_B\right\}$ indexed by $y$, where $X_B, \, B\subseteq V,$ is a subvector of $X_V$ and $\mathcal{X}_B$ is the associated  support. Different experimental conditions  will be distinguished by use of superscripts.

Consider a random vector $X_V \sim P_\theta$.  Within the context of two sample problems, the interest is often in testing the null hypothesis of equality of distributions $H_0: \theta^{(1)} = \theta^{(2)}$. If that hypothesis is rejected, one  usually aims at   localizing the source of difference. 

A common approach to tackle the question in genomics applications is to focus on the $p$ univariate marginal distributions, see for instance  \cite{ritchie2015limma} for a particularly popular method choice. %, irrespective of the possible structure of dependence among variables in $X_V.$   
Marginally speaking, a variable $X_v$, $v\in V$, can be  considered  relevant to the aim at hand if its marginal distribution  is  different in  $P_{\theta^{(1)}}$ and $P_{\theta^{(2)}}$. 

The (index) set of the relevant variables is then  taken to be
$$
R = \left\{v \in V:  \theta_v^{(1)} \neq \theta_v^{(2)}\right\}.
$$
Whether a variable belongs to $R$ depends solely on its  marginal distribution. 

Although   simple and computationally feasible,  the marginal approach might fail to point to the true source of difference whenever an interplay between variables plays a  role in differentiating the two distributions \citep{hudson2009differential}. In that case,  we propose to privilege a conditional perspective and exploit an approach which takes into account the entire $p$-dimensional joint distribution and flags a variable relevant only if the difference in its marginal distribution cannot be explained by the remaining variables. We define the set of conditionally relevant variables $D$ as follows.

\begin{definition} [Seed set]\label{seedsetdef}
	Consider  $\theta^{(1)},  \theta^{(2)} \in \mathcal{F}$. 	We call the set  $D\subseteq V$ the \textit{seed} set, if  
	the collections of conditional  laws $\theta_{V \setminus D\mid D}^{(1)}$ and $\theta_{V\setminus D\mid D}^{(2)}$ coincide.
	Furthermore, we say that $D$ is a \textit{minimal} seed set, if no proper subset of it is itself a seed set. 
\end{definition}

To facilitate the understanding of the above definition, it is helpful to consider that, by employing the factorization $p(x; \theta) = p(x_D; \theta_D) p(x_{\bar{D}} \mid x_{D}; \theta_{\bar{D}\mid D})$, 
where $\bar{D}= V \setminus D,$ the likelihood ratio $p(x; \theta^{(1)}) / p(x; \theta^{(2)})$  %{\color{blue} oppure $\theta^{(1)} / \theta^{(2)}$?} 
simplifies to $p(x_D; \theta_D^{(1)}) / p(x_D; \theta_D^{(2)})$. The likelihood ratio thus depends only on  variables in $D$. When comparing the two distributions, the variables outside of $D$ are either irrelevant or redundant  and $D$ can be seen as the minimal subset of variables explaining the difference between the two distributions. It should be stressed that there is no relation between $R$ and $D$; in general neither $R \subseteq D$ nor $D \subseteq R$.

In practice, to identify the seed set, $D$ needs to be estimated from  data. One could perform a  number of tests of equality of conditional distributions, but
when $p$ is large, this testing problem becomes extremely challenging, and represents an open area of research, see for instance \cite{zhu2016two} and references therein. In this paper, we assume that the dependence structure among the $p$ variables in the joint distribution can be well represented by an undirected graph.
We then address the problem of identifying  $D$ within the framework of graphical models, where  we exploit the structural modularity of decomposable graphical models \citep{frydenberg1989decomposition,dawid1993hyper}. To this aim, we assume that $\mathcal{F}$ is a strong meta Markov model with respect to 
a given undirected  decomposable graph $G= (V,E)$, where $E\subseteq V \times V$ is a set of edges. 
Let us denote by $\mathcal{M}(G)$  a family of distributions satisfying the global Markov property relative to $G$. According to the definition introduced by \cite{dawid1993hyper},  $\mathcal{F}\subseteq \mathcal{M}(G)$   is a  strong meta Markov model  if for any decomposition ($A,B$) of $G$,  parameters $\theta_A$  and $\theta_{B\mid A}$ are variation  independent in $\mathcal{F}$ \citep[p.26]{barndorff2014information}. In other words, all possible  values of $\theta_A$ are logically compatible with all possible values of $\theta_{B\mid A}$.

Under this assumption, there is a close  relationship between the parametric model structure and the underlying graph, and 
we  show that the problem of identifying  $D$ can be formulated as  the problem of testing equality of  lower dimensional conditional distributions  induced by the structure of $G$. We further show that the associated test statistics   are functions of the  quantities pertaining to the lower dimensional marginal distributions. The key advantage is that inference on   marginal distributions is significantly less challenging than inference on conditional distributions.  Beside the computational gain, we argue that the proposed approach addresses the issue of exploiting information on the structure of dependence in an efficient and elegant way.  

\section{Decomposition of the global hypothesis of equality of  two Markov distributions}
\label{equality}

A major appeal of decomposable graphs in graphical modelling is that they allow for a clique-grained decomposition of the statistical model. 
Let $C_1,\ldots,C_k$ be a sequence  of  cliques of $G$ satisfying a running intersection property (see Section \ref{preliminaries} in Appendix), and let $S_2,\ldots, S_k$ be an associated sequence of (possibly non-unique) separators. Then, if the distribution of  $X_V$ is Markov relative to $G$, its joint  distribution decomposes as:
$$
p(x_V) = p(x_{C_1}) \prod_{j=2}^k p(x_{R_j}\mid x_{S_j}),
$$
where $R_j = C_j \setminus S_j$, $j=2,\ldots,k$. Therefore, each distribution $\theta \in  \mathcal{F}$ can be uniquely decomposed into $k$ lower dimensional components: $\theta_{C_1}, \theta_{R_2\mid S_2}, \ldots, \theta_{R_k\mid S_k}$;   uniqueness ensures that  $\theta$ can be reconstructed back from its components. As a consequence, the global hypothesis of equality $H: \theta^{(1)} = \theta^{(2)}$ also decomposes along the perfect ordering as $H= \cap_{j=1}^{k} H_j$, where $H_1: \theta_{C_1} ^{(1)} = \theta_{C_1}^{(2)}$ and $H_j: \theta_{R_j \mid S_j}^{(1)} = \theta_{R_j\mid S_j}^{(2)}$, $j=2,\ldots,k$.  Since $\mathcal{F}$ is a strong meta Markov model, the components of $\theta$  are variation independent  and  there are no logical relations among  the $H_j$.  The following result states that the log-likelihood ratio for $H$ decomposes analogously and that all component test statistics can be computed in clique-induced marginal models. %} 

\begin{theorem}   
	\label{t1}                     
	Let  $X_{V,1}^{(1)},\ldots, X_{V,n_1}^{(1)}$  and $X_{V,1}^{(2)},\ldots, X_{V,n_2}^{(2)}$ be two independent random samples from, respectively, ${\theta^{(1)}}$ and   ${\theta^{(2)}}$, $\theta^{(l)}\in \mathcal{F}$,  $l=1,2$, where $\mathcal{F}$ is strong meta Markov model relative to $G$.	$H: \theta^{(1)}= \theta^{(2)}$
	and its decomposition $H= \cap_{j=1}^{k} H_j$, where $H_1: \theta_{C_1} ^{(1)} = \theta_{C_1}^{(2)}$ and $H_j: \theta_{R_j \mid S_j}^{(1)} = \theta_{R_j\mid S_j}^{(2)}$, $j=2,\ldots,k$.
	Let $\lambda(V)$ denote the log likelihood ratio criterion  for testing $H$ against a general alternative and let $\lambda(A)$ denote the log likelihood ratio criterion  for testing equality of distributions  induced by $A \subseteq V$. The following equality holds
	\begin{equation}
		\lambda(V) = \lambda(C_1) + \sum_{j=2}^k \left\{\lambda(C_j)-\lambda(S_j)\right\},
		\label{lrdecomp}
	\end{equation}
	where $\left\{\lambda(C_j)-\lambda(S_j)\right\}$ represents the log likelihood ratio for testing $H_j$.
	Moreover,  the $k$ terms on the right hand side of~\eqref{lrdecomp} are asymptotically independent under the null hypothesis. %}
\end{theorem}

\begin{proof}
	The joint density of any random sample of size $n$ from $\theta \in \mathcal{F}$ factorizes as
	\begin{equation}
		\label{fact}
		p(x_{(n)}; \theta) = p\left(x_{C_1, (n)}; \theta_{C_1}\right)\prod_{j=2}^k p(x_{R_j, (n)}\mid x_{S_j, (n)}; \theta_{R_j\mid S_j}),
	\end{equation}
	where $x_{(n)}$ stands for $x_1, \ldots, x_n$.  Each  component can be maximized separately to obtain  maximum likelihood estimates $\hat{\theta}_{C_1}$ and $\hat{\theta}_{R_j\mid S_j}$.  Note that maximum likelihood estimate  of $\theta_{C_1}$ is the same whether based on $x_{(n)}$ or $x_{C_1,(n)}$. 
	
	The likelihood ratio for testing $H$ is
	$$
	L(x_{(n_1+n_2)})= \frac{p\left(x_{(n_1+n_2)}; \hat \theta\right)}{p\left(x_{(n_1)}^{(1)}; \hat{\theta}^{(1)}\right)p\left(x_{(n_2)}^{(2)}; \hat{\theta}^{(2)}\right)},
	$$
	where $x_{(n_1+n_2)}$ denotes a pooled sample, $\hat{\theta}$ is the maximum likelihood estimate of $\theta^{(1)}=\theta^{(2)}$ under the null hypothesis, and $\hat{\theta}^{(l)}$, $l=1,2,$ is the maximum likelihood estimate of $\theta^{(l)}$ under the alternative. Factorizing each density as in \eqref{fact},  $L$ is decomposed into $k$ components corresponding to the local hypotheses $H_j, j=1,\ldots, k$. Using the equality { $\theta_{R_j\mid S_j}(x_{R_j}\mid x_{S_j})= \theta_{C_j}(x_{C_j})/ \theta_{S_j}(x_{S_j})$},
	we obtain the expression $\lambda(V) = \lambda(C_1) + \sum_{j=2}^k \left\{\lambda(C_j)-\lambda(S_j)\right\}$.
	Finally,  given the modular structure of the joint distribution, the  number of degrees  of
	freedom associated to $\lambda(V)$  is exactly the  sum of the degrees of freedom of the summands on the right hand-side, which is a sufficient  condition for the asymptotic independence of chi square random variables \citep{tan1977distribution}. 
\end{proof}

In what follows, we give explicit expressions for the decomposition for two important parametric families of distributions.  
\subsection{Gaussian graphical models}
Consider a subfamily of $\mathcal{M}(G)$ composed  of Gaussian graphical models.
In this case, $\theta = (\mu, \Sigma)$, with  $\mu \in \mathbb{R}^{p}$ and $\Sigma$ a symmetric positive definite matrix such that $\Sigma^{-1}\in S^+(G),$ where $S^+(G)$ denotes the set of all symmetric  $p \times p$ positive definite matrices  with zeros corresponding to the missing edges of $G$. For $A,B\subset V$, let $\Sigma_{AB}$  denote the corresponding submatrix of $\Sigma$ and let $\Sigma_A$ stand for $\Sigma_{AA}$.

For a given perfect clique ordering, the global hypothesis of equality $H: \theta^{(1)} = \theta^{(2)}$ decomposes as $H= \cap_{j=1}^k H_j$, with $H_1: \mu^{(1)}_{C_1} = \mu^{(2)}_{C_1}, \Sigma^{(1)}_{C_1} = \Sigma^{(2)}_{C_1}$ and $H_j: \theta_{R_j \mid S_j}^{(1)} = \theta_{R_j\mid S_j}^{(2)}$,  $j=2,\ldots,k,$ where
$$\theta_{A \mid B} = (\mu_{A} - \Sigma_{AB}\Sigma_{B}^{-1}\mu_{B}, \Sigma_{AB}\Sigma_{B}^{-1}, \Sigma_A - \Sigma_{AB}\Sigma_B^{-1}\Sigma_{BA}),$$
for $A,B \subset V,$ denotes parameters of the conditional law. 

Given two independent random samples of sizes $n_1$ and $n_2$ from $\theta^{(1)}$ and $\theta^{(2)}$, respectively, the log likelihood ratio $\lambda(A), A\subseteq V$, for testing the associated null hypothesis of equality is 
$$
\lambda(A)= \sum_{l=1}^{2}n_l\log\frac{|\hat{\Sigma}_A|}{|\hat{\Sigma}_A^{(l)}|},
$$
where $|\hat\Sigma|$ is determinant of  the maximum likelihood  estimate of $\Sigma$ under $H$, and $\hat{\Sigma}^{(l)}$, $l=1,2,$ are maximum likelihood estimates of $\Sigma^{(l)}$ under the general alternative \citep[p.416]{anderson2003introduction}. Since  $(\hat{\Sigma})^{-1}, (\hat{\Sigma}^{(l)})^{-1}\in S^+(G) $, $l=1,2$, and the determinant of every  $\Omega$ for which $\Omega^{-1} \in S^+(G)$ can be decomposed with respect to the graph as  $|{\Omega}|=\prod_{i=1}^k|\Omega_{C_i}|/\prod_{i=2}^k|{\Omega}_{S_i}|$ \citep[p.145]{lauritzen:96},  the log likelihood ratio $\lambda(V)$ can be equivalently written as
$
\lambda(V)= \sum_{i=1}^k\lambda(C_i)-\sum_{i=2}^k\lambda(S_i),
$
from which  equality of Theorem 1 follows. It is important to stress that when subgraph induced by $A$ is complete, which is the case with cliques $C_i$ and separators $S_i$,  then maximum likelihood estimate of $\Sigma_A$ is unconstrained.  {\color{black} In particular, if for ease of notation we temporarily drop the index $A$ in $x_A^{(l)},\,l=1,2$,  and write $x^{(l)}$ instead, we have 
$$
\hat{\Sigma}_A = \frac{1}{n_1+n_2}\left[\sum_{i=1}^{n_1}(x_i^{(1)} -\bar{x})(x_i^{(1)}-\bar{x})^T + \sum_{j=1}^{n_2}(x_j^{(2)} -\bar{x})(x_j^{(2)}-\bar{x})^T\right], 
$$
where $\bar{x}= (n_1\bar{x}_1 + n_2\bar{x}_2)/(n_1+n_2)$, whereas  $\hat{\Sigma}_A^{(1)}$ and $\hat{\Sigma}_A^{(2)}$ are unconstrained estimates of $\Sigma_A$ computed in the two samples, i.e. 
$$
\hat{\Sigma}_A^{(l)} = \frac{1}{n_l} \sum_{i=1}^{n_l} (x^{(l)}_i-\bar{x}_l)(x^{(l)}_i-\bar{x}_{l})^\top, \quad \bar{x}_{l} =\frac{1}{n_l}\sum_{i=1}^{n_l} x^{(l)}_i, \quad l=1,2.
$$
} In other words, it is possible to compute $\lambda(V)$ from test statistics computed in clique-induced marginal models in which maximum likelihood estimation is unconstrained.

\subsection{{Graphical log-linear models}}
Consider a subfamily $\mathcal{P}\subset \mathcal{M}(G)$ of graphical log-linear models. Each $X_v$ is now a categorical random variable with a finite set of possible values or levels $\mathcal{I}_v$. Here, $\mathcal{X}_V= \times_{v \in V} \mathcal{I}_v$.  We  refer to the elements of $\mathcal{X}_V$ as table cells \citep[Chapter 4]{lauritzen:96}). Let $X_{V,1}, \ldots, X_{V,n}$ be  $n \in \mathbb{N}$ independent realizations of  $X_V$.  Cell counts are defined as 
$$
n(h) = \sum_{i=1}^n I\left\{X_{V,i} = h\right\}, \quad h \in \mathcal{X}_V,
$$ 
where $I\left\{\cdot\right\}$ denotes the indicator function. 

For $A\subset V$, table cells $h_A \in \mathcal{I}_A = \times_{v\in A} \mathcal{I}_v$ are obtained by classifying observations only with respect to the variables in $A.$ Marginal cell counts  are $n(h_A)= \sum_{i=1}^{n} I\left\{X_{A,i} = h_A\right\}$, where $X_{A,i}$ is a subvector of $X_{V,i}$ induced by  $A$. 

Under a multinomial sampling scheme, the probability of the observed cell counts is
$$
\mathrm {Pr}(N(h)=n(h), h \in \mathcal{X}_V) = \frac{n!}{\prod_{h \in \mathcal{X}_V} n(h)!}\prod_{h \in \mathcal{X}_V}p(h)^{n(h)},
$$
where $p(h)$ is the probability for cell $h \in \mathcal{X}_V$.  In this case, $\theta = \left\{p(h)\right\}_{h \in  \mathcal{X}_V}$ satisfies the constraint  $\sum_{h \in  \mathcal{X}_V} p(h) =1$ and decomposes as $\theta_{C_1}= \left\{p(h_{C_1})\right\}_{h_{C_1}\in \mathcal{X}_{C_1}}$, which refers to the marginal table induced by $C_1$, and $\theta_{R_j\mid S_j}$ for $j=2,\ldots,k$, where $\theta_{A\mid B} = \left\{p(h_A \mid h_B)\right\}_{h_{A\cup B} \in \mathcal{X_{A \cup B}}}$ refers to the parameters of the $h_B$-slice of the table,  i.e.,  a table in which  objects are classified with respect to  the variables in $A$ for a given fixed level of the variables in $B$.

Consider now $\theta^{(1)},\theta^{(2)} \in \mathcal{P}$ and the null hypothesis of equality of probabilities in the marginal  table induced by $A\subseteq V$. Given two independent random samples with observed cell counts  $n^{(1)}$ and $n^{(2)}$ from $\theta^{(1)}$ and $\theta^{(2)}$, respectively, the log likelihood ratio $\lambda(A)$ is 
$$
\lambda(A)= 2\left\{\sum_{h_A\in \mathcal{X}_A}\sum_{l=1}^2 n^{(l)}(h_A)\log\left( \frac{\hat{p}^{(l)}(h_A)}{\hat{p}(h_A)}\right)\right\},
$$
where $\hat{p}_A$ is the maximum likelihood estimate of $p_A$ under the null hypothesis; and $\hat{p}^{(1)}_A$ and $ \hat{p}^{(2)}_A$ are maximum likelihood estimates of  $p^{(1)}_A$ and $p^{(2)}_A$ under a general alternative.  Using  the structural decomposition reflected in the maximum likelihood estimator $\hat{p}$:
\begin{equation}\label{mledecompct}
	\hat{p}(h)= \frac{\prod_{j=1}^{k} \hat{p}(h_{C_j})}{\prod_{j=2}^k \hat{p}(h_{S_j})}, \quad h\in \mathcal{X}_V,
\end{equation}
we obtain the decompisiton of $\lambda(V)$ featured in Theorem \ref{t1}. Degrees of freedom associated to $\lambda(V)$ can be computed from the formula $f(V) = f(C_1) + \sum_{j=2}^k \left\{f(C_j)- f(S_j)\right\}$, where $f(A)$ denotes degrees of freedom in a  model induced by $A\subseteq V$. Since marginal models induced by cliques and separators are saturated, their degrees of freedom are obtained as $f(C_j)= \prod_{v \in C_j} |\mathcal{I}_v| -1$, and analogously for separators.

\section{Estimation} 
\label{keyresult}
\subsection{The  graphical seed set}
\label{gssSection}

Before we show how the result of the previous section can be used to make inference about the seed set, we need to  introduce the concept  of the  graphical seed set. Namely,   by employing a clique-grained decomposition, we are not always able to identify the minimal seed set; in those cases we can identify its superset that we denote  by $D_G$.   Relation between the two sets, that depends on  both $D$ and $G$, is the subject of this section.

\begin{definition}[Graphical seed set] \label{gssdef}
	Let $D$ be a minimal seed set for   $\theta^{(1)}$ and $\theta^{(2)} $, two  graphical distributions  Markov with respect to     $G$. Let  $\mathcal{S}=\left\{S: S \mbox{ is a separator in } G\right\}$ be   the collection of  separators in $G$. Then  we call the set  
	
	\begin{equation}
		D_G =\left\{v \in V \mid \forall S\in \mathcal{S},\mbox { either } v\in S \mbox{ or } S \mbox{ does not separate } v \mbox{ from  } D  \mbox{ in } G\right\}
		\label{gssD}
	\end{equation}
	a \textit{graphical seed set}.
\end{definition}

In the above definition, we allow for non-empty intersection between $S$ and $D$, as well as $S=D$.
When $v\in D$, the condition \eqref{gssD} is trivially satisfied ($v$ cannot be separated from $D$ by any set), and therefore $D_G \supseteq D$. 	The graphical seed set $D_G$ is thus the smallest  set containing the seed set $D$ that can be identified by means of  set operations on cliques and separators of $G$. 

When the minimal seed set is a separator, we can set $S=D$ in \eqref{gssD}, to obtain $D=D_G$. In general,  $D$ and $D_G$ will coincide whenever $D$ can be expressed as an intersection of two or more cliques. In other instances,  $D_G$ will be a seed set, but not a minimal one. {For an illustrative example, see Section \ref{example} in Appendix. }

\subsection{The graphical seed set estimator}
We have seen above that the global hypothesis of equality can be decomposed according to a specified perfect ordering  into a set of local hypotheses. 
However, the perfect ordering is not unique. In fact, there are multiple decompositions of the global hypothesis, each   corresponding  to a different factorization of the same  distribution.  It is this multiplicity that we  exploit when estimating  the graphical seed set. 

For a given graph, the enumeration of all decompositions might resemble the problem of enumerating  its junction trees \citep{thomas2009enumerating}, but a closer look reveals that it is a far simpler  task. Given the uniqueness of the sequence of separators, it is not difficult to show that there is exactly one decomposition for  each choice of the  root clique -- the clique labeled $C_1$ -- leading to a total of  $k$  decompositions.

Before we show how these  different decompositions relate to the graphical seed set in Proposition~\ref{gss}, we introduce some notation and  restate the global testing problem in decision theory terms.
Let { $\Theta \times \Theta$} be the unrestricted parameter space of { $(\theta^{(1)},\theta^{(2)});$ let $\Theta_0 = \left\{ (\theta, \theta); \theta \in \Theta\right\}$ denote the space restricted by  $H: \theta^{(1)}=\theta^{(2)}$, and  let  $\Theta_1 = (\Theta \times \Theta) \setminus\Theta_0$. We want to test $H: (\theta^{(1)}, \theta^{(2)}) \in \Theta_0$ against a general alternative $ (\theta^{(1)}, \theta^{(2)}) \in \Theta_1.$} Let the decision taken on $H$ be denoted by  $d$, where $d=0$ means that the null hypothesis is not rejected and $d=1$ means that the null hypothesis is rejected. A test  $\phi$ is a mapping  from the sample space to the set $\left\{ 0, 1\right\}$ (we rule out the trivial case that the test  makes no decisions). 
Let $d^*$ denote the correct decision (the truth)  for  $H$. 
As seen in the previous Section, the null hypothesis can be decomposed into a set of independent local hypotheses,
i.e., $H = \bigcap_{j=1}^{k} H_{j}$,
and we denote by  $d^*_j$  the correct decision for $H_{j}, \, j=1,  \ldots k$, so that   $d^*= (d^*_1, \ldots, d^*_k). $ To identify the $i-$th decomposition, obtained when   $C_i$ is set as the root clique, we let $C_{i,1},\ldots,C_{i,k}$ denote a sequence of cliques satisfying the running intersection property. Let $S_{i,2},\ldots,S_{i,k}$ be an associated sequence of separators, and set $S_{i,1}=\varnothing$, $i=1,\ldots,k$. In this notation, $H_{i,j}$ will denote the $j-$th null hypothesis in decomposition $i$, $\phi_{i,j}$ the corresponding test,  and $d^*_{i,j}$ the associated correct decision. %and $d_{i,j}$ the decision taken by the test.

We now show the connection between the graphical seed set and  the decompositions obtained from the graph $G$. 
\begin{proposition} \label{gss}
	Let $d_i^*=\left(d_{i,1}^*,\ldots,d_{i,k}^*\right)$ be the vector of correct decisions for the hypotheses $H_{i,j}$ of equality of collections of conditional  distributions of  $X_{R_{i,j}} \mid X_{S_{i,j}}$ in the $i-$th decomposition.  Then
	\begin{equation*}
		D_G=\bigcap_{_{i=1}}^{k}\bigcup_{\left\{j: \,\,d_{i,j}^*=1\right\}} C_{i,j}.
		\label{procedure}
	\end{equation*}
\end{proposition}

The above proposition gives an oracle procedure for recovering the graphical seed set from the knowledge of the two joint distributions. In practice, we need to rely on statistical tests.  Let $\phi_i = \left(\phi_{i,1}, \ldots, \phi_{i,k}\right)\in \left\{0,1\right\}^k$ be a vector indicating the results  of the statistical tests performed in the $i$-th decomposition, $i=1, \ldots k,$ with $\phi_{i,j}=1$ when the hypothesis $H_{i,j}$ is rejected, and  $\phi_{i,j}=0$ otherwise. The following definition naturally follows.

\begin{definition}[Graphical seed set estimator]
	The random set $\hat{D}_G$, defined as
	\begin{equation}
		\hat{D}_G = \bigcap_{_{i=1}}^{k} \bigcup_{\left\{j: \,\,\phi_{i,j}=1\right\}} C_{i,j}
		\label{graphicalseedset}
	\end{equation}
	is an estimator of  $D_G.$  
\end{definition}

\subsection{Asymptotic behavior}
Estimator $\hat{D}_G$ is  different from classical estimators in that its values depend on 
data through the results of  sequences of tests. Properties of the estimator will ultimately
depend on the  properties of the tests which are used.  A treatment of these properties in the limit of infinite data benefits from the introduction of a more general notion of consistency of tests, that we give in general terms as follows (see Definition 1 in \cite{robins2003uniform} for a similar treatment).
\begin{definition}
	A  sequence of tests $\phi(n)$ for the hypothesis $H: (\theta^{(1)}, \theta^{(2)}) \in \Theta_0$ vs $H_1: (\theta^{(1)}, \theta^{(2)}) \in \Theta_1$ is consistent if  for each $(\theta^{(1)}, \theta^{(2)}) \in \Theta \times \Theta$ there exists 
	a sequence  of significance levels $\alpha_n$ s.t. 
	\begin{itemize}
		\item[(1)] for each $(\theta^{(1)}, \theta^{(2)}) \in \Theta_0,\,\,\,$ $ \lim_{n\rightarrow \infty} \mathbb{P}_{(\theta^{(1)}, \theta^{(2)})}(\phi(n) = 1) = 0;$
		\item[(2)] for each $(\theta^{(1)}, \theta^{(2)}) \in \Theta_1,\,\,\,$ $ \lim_{n\rightarrow \infty} \mathbb{P}_{(\theta^{(1)}, \theta^{(2)})}(\phi(n) = 0) = 0.$
	\end{itemize}
\end{definition}
In other words, a sequence of tests is consistent if, at least asymptotically, it  reports a correct decision.
Let us now consider testing $H_{i,j}$ in the above given framework. Let $n = n_1 + n_2$ and assume that as $n \rightarrow \infty,$ $n_l/n \rightarrow \gamma_l$ such that $0< \gamma_l< 1, \,\, l=1,2,$ and $\gamma_1+\gamma_2=1.$  Moreover, let the test statistic $\phi_{i,j}(n)$ be defined as 
$$
\phi_{i,j}(n) = \begin{cases}
	0 & \lambda_{i,j;n} < q_n \cr
	1 & \lambda_{i,j;n} > q_n \cr 
\end{cases}
$$
where $\lambda_{i,j;n}$ is the log likelihood ratio  for $H_{i,j}$ and $q_n$ a suitable sequence of quantiles. 
Standard results assure that, under the null hypothesis, the sequence  $\lambda_{i,j;n}$ converges to a chi-square distribution with $f$ degrees of freedom, where $f$ is the difference between the  dimensions of the unrestricted  parameter space  and the restricted
parameter space implied by the hypothesis of equality of the distributions of  $X_{R_{i,j}} \mid X_{S_{i,j}}$
in the two groups. Then, the test that rejects the null hypothesis  if $\lambda_{i,j;n}$ exceeds 
the upper $\alpha$-quantile of the chi-square distribution is asymptotically of level $\alpha.$ 
We can state the following proposition.
\begin{proposition}\label{prop2}
	In the framework stated above, for each $H_{i,j}$, there exists a sequence of significance levels $\alpha_n$, s.t. the  sequence of tests $\phi_{i,j}(n)$ is consistent.
\end{proposition}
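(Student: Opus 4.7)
The plan is to exhibit a single sequence of critical values $q_n$ that makes both error probabilities vanish for every fixed $\theta \in \Theta$. The key observation is that $\lambda_{i,j;n}$ lives at two different scales in $n$: it is $O_p(1)$ under the null and grows linearly in $n$ under any fixed alternative. Any $q_n$ with $q_n \to \infty$ and $q_n = o(n)$ will therefore work; for concreteness, take $q_n = \log n$ and $\alpha_n = \mathbb{P}(\chi^2_f > q_n)$, which tends to $0$.

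For condition $(1)$, fix $\theta \in \Theta_0$. The two-sample Gaussian problem restricted to the clique marginal $C_{i,j}$ is a regular parametric model, and $H_{i,j}$ is a smooth constraint on its parameters (equality across groups of the linear regression of $X_{C_{i,j}\setminus S_{i,j}}$ on $X_{S_{i,j}}$, together with the residual covariance). Wilks' theorem therefore gives $\lambda_{i,j;n} \xrightarrow{d} \chi^2_f$, so $\lambda_{i,j;n} = O_p(1)$, and $q_n \to \infty$ forces $\mathbb{P}_\theta(\lambda_{i,j;n} > q_n) \to 0$.

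For condition $(2)$, fix $\theta \in \Theta_1$, so the two conditional laws genuinely differ. Consistency of the MLE in both the unrestricted and the null submodel, together with the law of large numbers applied to the log-densities at these MLEs, yields
$$ n^{-1} \lambda_{i,j;n} \xrightarrow{P_\theta} 2 \left\{ \gamma_1 \mathrm{KL}\!\left(P_\theta^{(1)}, P_{\tilde\theta}^{(0)}\right) + \gamma_2 \mathrm{KL}\!\left(P_\theta^{(2)}, P_{\tilde\theta}^{(0)}\right) \right\} =: 2\delta(\theta), $$
where $P_{\tilde\theta}^{(0)}$ is the common KL projection of the pair $(P_\theta^{(1)}, P_\theta^{(2)})$ onto the null submodel. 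Identifiability of the conditional normal model forces $\delta(\theta) > 0$ whenever $\theta \in \Theta_1$, and combined with $q_n/n \to 0$ this gives $\mathbb{P}_\theta(\lambda_{i,j;n} < q_n) \to 0$. The main technical work is concentrated here: one must verify that the null MLE stays tight and converges to the KL projection $\tilde\theta$, and that the displayed limit is strictly positive. Both are standard regular exponential-family asymptotics, applied not to the full $p$-dimensional graphical model but to the considerably smaller clique submodel on $C_{i,j}$, so the argument remains elementary even in the high-dimensional regime the paper targets.
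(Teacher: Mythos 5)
Your proof is correct and follows essentially the same route as the paper's: a threshold sequence $q_n \to \infty$ with $q_n = o(n)$ (you take $q_n=\log n$, the paper takes $q_n=n^d$ with $0<d<1/2$), Wilks' theorem under the null, and divergence of the likelihood ratio statistic under any fixed alternative. Your treatment of the alternative is in fact more explicit than the paper's, which merely asserts that the statistic is ``degenerate with the order $O(\sqrt{n})$,'' whereas you spell out the linear growth $n^{-1}\lambda_{i,j;n} \to 2\delta(\theta) > 0$ via the Kullback--Leibler projection onto the null submodel and note the identifiability point that makes $\delta(\theta)$ strictly positive.
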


\begin{theorem}\label{t2}
	The estimator $\hat{D}_G$ is a pointwise consistent estimator of $D_G$,  i.e.,  $\mathbb{P}_{(\theta^{(1)},\theta^{(2)})}(\hat{D}_G = D_G) \rightarrow 1.$
\end{theorem}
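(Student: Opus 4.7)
The plan is to reduce the event $\{\hat{D}_G=D_G\}$ to the event that each of the finitely many local tests returns the correct decision, and then to combine Proposition~\ref{prop2} with a union bound.

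First, I would note a deterministic inclusion. By Proposition~\ref{prop1},
$$D_G = \bigcap_{i=1}^{k}\,\bigcup_{\{j:\, d^*_{i,j}=1\}} C_{i,j},$$
while $\hat{D}_G$ is defined by exactly the same expression in (\ref{graphicalseedset}), with $d^*_{i,j}$ replaced by $\phi_{i,j}$. Consequently, on the event
$$A_n = \bigcap_{i=1}^{k}\bigcap_{j=1}^{k}\bigl\{\phi_{i,j}(n) = d^*_{i,j}\bigr\},$$
the two unions agree index-by-index and hence so do their intersections, giving $\{\hat{D}_G=D_G\}\supseteq A_n$. Therefore $\mathbb{P}_\theta(\hat{D}_G\neq D_G)\le \mathbb{P}_\theta(A_n^{c})$.

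Second, a Bonferroni bound yields
$$\mathbb{P}_\theta(A_n^{c})\ \le\ \sum_{i=1}^{k}\sum_{j=1}^{k}\mathbb{P}_\theta\bigl(\phi_{i,j}(n)\neq d^*_{i,j}\bigr),$$
and by Proposition~\ref{prop2} each summand can be driven to $0$ along an appropriate sequence of significance levels $\alpha_n^{(i,j)}$. Since the total number of tests is at most $k^{2}$, I would pick a single sequence $\alpha_n\to 0$ (for instance $\alpha_n=1/\log n$) slowly enough that the common critical value $q_n$ both diverges, sending every type~I error to zero, and satisfies $q_n=o(n)$, so that it is eventually dominated by the noncentrality of each non-null likelihood ratio (which grows linearly in $n$, the limit $\lambda_{i,j;n}/n$ being the positive Kullback--Leibler divergence of the two conditional distributions in question). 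With this common choice, every one of the $k^{2}$ summands vanishes, the finite sum itself vanishes, and pointwise consistency follows.

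The substantive issue is really confined to the last paragraph: one must check that a \emph{single} sequence $\alpha_n$ simultaneously controls both error types across all the tests. This is standard but worth stating carefully, because each test has its own null distribution (a $\chi^{2}$ with degrees of freedom equal to the parameter-space difference in the sub-model indexed by $C_{i,j}\setminus S_{i,j}$ and $S_{i,j}$) and its own alternative. Finiteness of the collection of tests and of the relevant degrees of freedom makes this a routine verification, so I do not expect any further obstacle.
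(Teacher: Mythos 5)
Your proof is correct and follows essentially the same route as the paper's: reduce the event $\{\hat{D}_G = D_G\}$ to the event that every local test $\phi_{i,j}(n)$ returns the correct decision $d^*_{i,j}$, and then combine a union (Bonferroni) bound with Proposition~\ref{prop2}. You are somewhat more explicit than the paper about the deterministic inclusion via Proposition~\ref{prop1} and about choosing a single sequence $\alpha_n$ that works uniformly over the finite collection of tests, but these are refinements of the same argument rather than a different approach.
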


\subsection{{\color{black}Finite sample   type I error control}}\label{fwerg}
With finite samples, it is customary to assign a bound to the probability of incorrectly rejecting the null hypothesis by imposing conditions such as $\mathbb{P}_{(\theta^{(1)}, \theta^{(2)}) \in \Theta_0}(\phi_{i,j}(n) = 1)\le \alpha.$ Estimation of  ${D}_G$ requires performing a collection of   $k+\sum_{i=1}^{k}\nu(C_i)$ tests, where $\nu(C_i)$ denotes the number of separators contained within the clique $C_i$.  Finite sample behavior of $\hat{D}_G$ thus  hinges on the proper control of the multiplicity issue. 

{\color{black}We focus on the requirement that  the probability that $\hat{D}_G$ contains a false positive should be bounded by a given $\alpha\in (0,1)$, i.e. 
${\mathrm P}(\exists v \in V: v\in \hat{D}_G \cap v\notin D_G) \leq \alpha$. But, if there is such a node $v$, then given Definition 3 of $\hat{D}_G$, necessarily one of the true null hypotheses in the collection of hypotheses $\mathcal{H} = \left\{H_{ij}, i,j = 1,\dots, k\right\}$ was erroneously rejected. This implies that the control of familywise error rate for $\mathcal{H}$, i.e. the probability of rejecting at least one true null hypothesis, results in the control of probability of including a false positive in $\hat{D}_G$.

The simplest approach to  control the familywise error rate  is to apply the Bonferroni correction with a factor of $k+\sum_{i=1}^{k}\nu(C_i)$. }However, the Bonferroni  correction can be overly conservative {\color{black} when there is high dependence among $p$-values. This is the case here, since although local test statistics are independent within a single decomposition (see Theorem \ref{t1}), considering  alternative decompositions leads to logical relations among hypotheses and typically results in a high positive dependence between  the associated  $p$-values.  }
To address this issue, we employ  the max$T$ method of \cite{westfall1993resampling}, which uses permutations to obtain the joint distribution of the $p$-values and, by accounting  for the dependence among $p$-values,  attenuates the conservativeness of the Bonferroni procedure. In our setting,  the condition of subset pivotality is satisfied, and  the  Westfall and Young procedure  controls the familywise error rate in the strong sense. 

%When  familywise error rate for our collection of hypotheses is controlled at level $\alpha$, the probability that $\hat{D}_G$ contains a false positive, i.e. any  $v\in V$, such that $v\notin D_G$,  is at most $\alpha$.  
In many applications, familywise error rate control  is considered too stringent and false discovery rate is considered instead. {\color{black} Unfortunately, no such simple relation exists between controlling false discovery rate for $\mathcal{H}$ and the inclusion of false positives in $\hat{D}_G$.  In other words, it is unclear how controlling false discovery rate for $\mathcal{H}$ translates to the type I error guarantees for $\hat{D}_G$.} For this reason, we restricted our attention to the familywise error rate.

\section{Simulation studies}
\label{sstudies}
\subsection{Simulation study 1}
\label{simstudy}
To study the finite sample behavior of $\hat{D}_G$, we considered a randomly generated graph $G$ consisting of 100 nodes grouped in 37  cliques (the largest clique containing 15 nodes).   The code to reproduce all numerical experiments, as well as real data analysis featured in Section \ref{rd}, is available at \texttt{https://github.com/veradjordjilovic/Seed-set}.
A plot of the graph is shown in Figure \ref{fig:SimulationStudy} in Appendix. The minimal seed set was set to $D=\left\{2,5\right\}.$ In the chosen graph, the graphical seed set does not coincide with the minimal seed set since there is no separator in $G$ that separates a node number 17  from $D$. We thus have $D_G=\left\{2,5,17\right\}$.

We will work in the Gaussian setting. We set the parameters of the first, i.e.  control, condition in the following way. The means of 100  variables were drawn randomly  from a normal distribution centered at $0.5$ (standard deviation 1). The  covariance matrix was obtained by starting from a matrix with  all off-diagonal elements equal to 0.4 and all diagonal elements equal to 1 and modifying it so that its inverse has zeros corresponding to the missing edges of $G$. 
For the second  or the perturbed  condition, we considered {\color{black} perturbations that alter the means of the two seed set variables linearly. In particular, the means were multiplied by $\lambda$ that varied in the range $\left\{1.2, 1.25, \ldots, 1.6, 1.65\right\}.$}  
{\color{black}The variance of seed set variables was also manipulated and decreased by 50\%.   We held the sample size fixed and equal for the two conditions: $n_1 = n_2 = 50$. For each $\lambda,$ we generated 1000 pairs of samples. }

{\color{black} Note that this perturbation affecting $X_2$ and $X_5$, indirectly affected  all the marginal distributions of $(X_1, \ldots,X_{100})^\top$. For an illustration of this effect, see Figure~\ref{vcmat},  Appendix, that compares the parameters associated to the first ten variables, i.e., $X_1,\ldots, X_{10},$  in the first and in the second condition for $\lambda = 1.7$.}

We computed  $\hat{D}_G$ with the \texttt{SourceSet}  \texttt{R} package, which implements the proposed approach (available from CRAN).  The familywise error rate was controlled  at 5\% by  the step-down max$T$ method  \citep{westfall1993resampling}. To evaluate the performance of our procedure, we  computed  {\color{black} the empirical power, defined as the frequency with which  the estimated graphical seed set $\hat{D}_G$ coincided with the true graphical seed set $D_G$, and the empirical familywise error rate, defined as the frequency with which $\hat{D}_G$ contained a false positive. 
The results are shown in Figure \ref{ss1a}.  %We remark that, to the best of our knowledge, there are no methods aiming to estimate a quantity similar to $D_G$,  i.e. the origin of the perturbation that can affect  both the means and the (co)variances. However, in Simulation study 2, we discuss comparisons with alternative approaches that aim to detect more specific forms of perturbations: either affecting exclusively the graphical structure or the vector of means. 
}  

\begin{figure}\centering
\includegraphics[width = 0.8\linewidth]{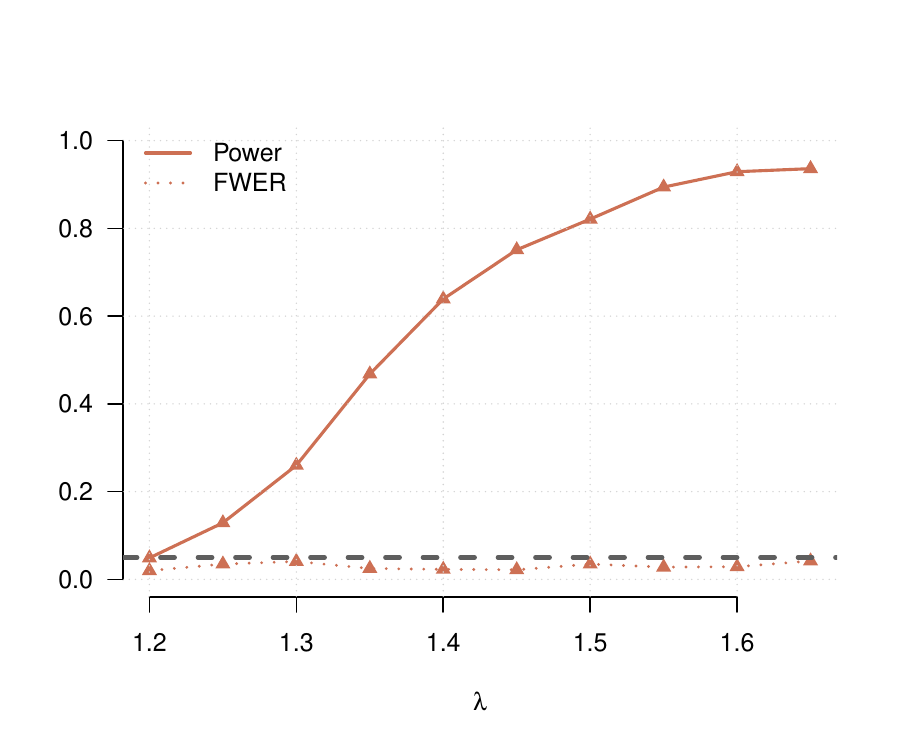}
\caption{\label{ss1a}Simulation study 1: Empirical power and familywise error rate of the graphical seed set estimating procedure as a function of perturbation strength $\lambda$.  Dashed horizontal line $y = 0.05$, representing the nominal familywise error rate, was added for reference.}
\end{figure}

%We see that under the global null hypothesis the true seed set, $D_G=\varnothing$, is correctly identified approximately 98\% of times, irrespective of the sample size. This is a consequence of the fact that by controlling the familywise error rate at level $0.05$, we are controlling at the same level  the probability of including false positives in $\hat{D}_G$. This is true not only under the global null hypothesis (no perturbation), but is valid in general, as can be seen from the estimated error rate, i.e. the average number of times the estimated graphical seed set contained a false positive, in other scenarios. On the other hand, as expected, the performance under the alternative hypothesis  depends on the   strength of the perturbation. When the perturbation is strong, the power of the employed tests approaches 1 even for the smallest sample size ($n=50$), and the seed set is identified correctly more than 95\% of the times. When the perturbation is weak, the power to detect it is low, and larger sample sizes are needed. This is evident from the mild perturbation setting, where even for $n=100$, the seed set was identified correctly  only  16.9\% of times. This is a consequence of our choice to  control the inclusion of false positives: when the power is  low,  we are bound to obtain an estimate which is a subset of the true seed set.   

{\color{black} Results show that the familywise error rate is controlled at the nominal level for all values $\lambda$, which is in line with finite sample theoretical type I error guarantees described in Section \ref{fwerg}. With regards to power, for the lowest level of perturbation $\lambda = 1.2$, corresponding to an increase of 20\% in variables $X_2$ and $X_5$, we see that the power to identify $D_G$ is very low. With increasing $\lambda$, the power is fast increasing and reaches $80\%$ already for $\lambda=1.5$. Note that given our definition of power, the maximum attainable power is  bounded by the complement of the familywise error rate, i.e.  $1- \mathrm{P}(\exists v \in V: v \in \hat{D}_G \land v \notin D_G) \approx 1-\alpha$, rather than 1. }

{\color{black}
\textbf{Unbalanced sample sizes.}
We further studied the impact an unbalanced sample size can have on the performance of the seed set estimating procedure. To this end, we fixed parameters of  the perturbed condition by setting $\lambda = 1.3$ and then varied the sample size of the pooled sample $n = n_1 + n_2$ in the set $\left\{75, 100, 125, 150, 200, 250, 300, 350\right\}$. We computed the empirical power and familywise error rate in two scenarios featuring:
\begin{itemize}
\item balanced samples: $n_1=n_2$ when $n$ is even, or $n_1 = \left \lfloor{n/2}\right\rfloor$ and $n_2 = n_1+1$, when $n$ is odd;
\item unbalanced samples: $n_1 = 50$ and $n_2 = n- n_1$. 
\end{itemize}
Results, shown in Figure \ref{ss1b}, indicate that the familywise error rate is controlled well in both scenarios. With regards to power, when the total sample size is small, the two scenarios are comparable. With increasing sample size, the difference between $n_1$ and $n_2$ is also increasing, and the power in the scenario with balanced samples is higher, but the advantage does not seem to be very large. 
\begin{figure}\centering
\includegraphics[width = 0.8\linewidth]{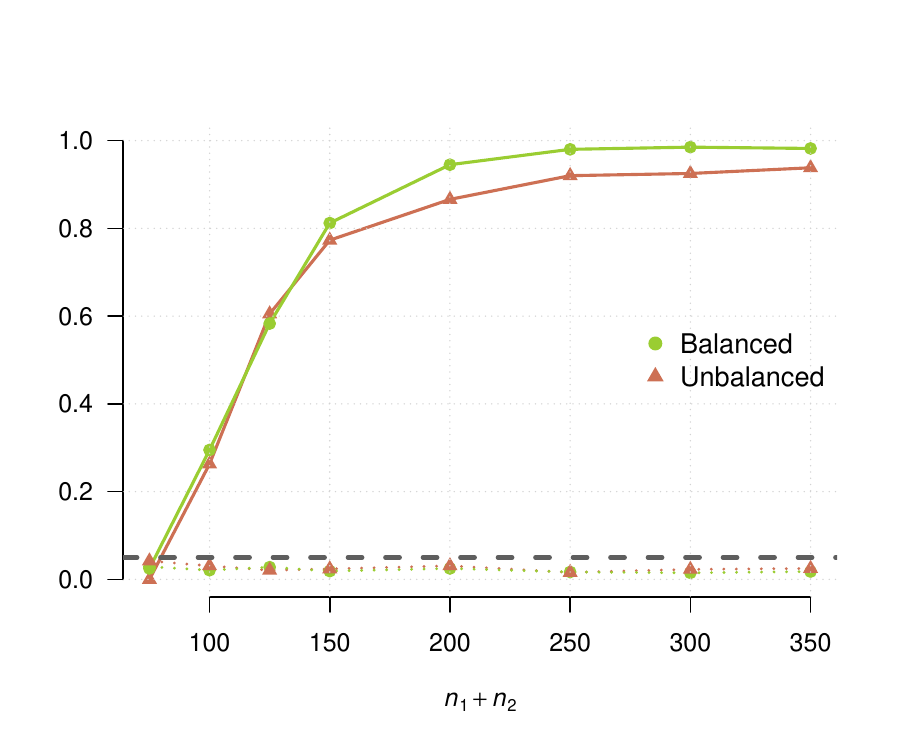}
\caption{\label{ss1b}Simulation study 1: Empirical power and familywise error rate of the graphical seed set estimating procedure as a function of the pooled sample size $n = n_1+n_2$.  In an unbalanced sampling scheme $n_1=50$ was fixed, while in a balanced sampling scheme $n_1=n_2$ if $n_1+n_2$ was even, and $|n_1-n_2| =1$ otherwise. Dashed horizontal line $y = 0.05$, representing the nominal familywise error rate, was added for reference.}
\end{figure}

\textbf{Robustness to non-normality. } An important issue arising in practical applications is the sensitivity of the procedure to  the presence of departures from normality. To investigate this issue, we have considered data generated from skew-normal graphical models \citep{capitanio2003graphical} and studied the power and familywise error rate as a function of skewness. The results of this simulation study, described in Section~\ref{SMSkew}, Appendix, suggest that when compared to a setting with normal data,  the power does not seem to be much affected, while the  familywise error rate increases and possibly surpasses the pre-specified level $\alpha$. Nevertheless, the increase seems to be small enough as to allow us to conclude that the procedure is quite robust to this particular violation of normality. 

\textbf{Competing methods.} To the best of our knowledge,  no alternative methods aiming to estimate $D_G$,  i.e. the origin of the perturbation affecting both the means and the (co)variances are currently available. However, some recent approaches focus on detecting more specific forms of perturbations: either those affecting exclusively the graphical structure or the vector of means.  In the following section, we report the comparison with a method addressing the former, while in Section~\ref{SMcomparison}, Appendix, we provide a comparison with a method addressing the latter.}

\subsection{Simulation study 2}
To study the behavior of our procedure  when the the difference between two conditions is driven only by the graphical structure, we considered a small graph consisting of 10 nodes, shown in Figure \ref{differentstructure}.  The edge between nodes 4 and 6 is present in condition 1, but absent in condition 2,   i.e.,  in condition 2, variables associated to nodes 4 and 6 are conditionally independent given the rest. It is worth noting that, in condition 2, the graph is  not decomposable and that the graphical structure to be used in estimating $D$ is that of condition 1, as it represents the decomposable model common to the two conditions. The minimal seed set is now $D=\left\{4,6\right\}$, and it  coincides with the graphical seed set. 

Means of the 10 variables were randomly drawn from  a normal distribution centered at $0.5$ (standard deviation 1) and were the same for conditions 1 and 2.   In each condition, the  covariance matrix  was obtained from a  matrix with  all diagonal elements equal to 1 and all off-diagonal elements equal to 0.6, that was modified so that the zero pattern of its inverse corresponds to the missing edges of $G$.  Three different sample sizes were considered, i.e., $n=200, 300, 500$.

Results, averaged over 500 Monte Carlo runs, are shown in Table \ref{tablelabel}, where rows labeled `Seed set' report the percentage of times each node was found to belong to $D$.  Results show that, in this setting, the power, although limited at the smallest sample size,  is increasing with increasing sample sizes. This is understandable, since, differently from simulation 1, the difference between the two conditions is relatively sparse, and the smaller this difference, the harder it is to distinguish between the null and the alternative hypothesis.

It is interesting noting that methods for differential networks,  such as those in \cite{zhao2014direct} and \cite{xia2015testing}, could also have been used in this setting. For an appreciation of the different results produced by different approaches,  we considered the  method of \cite{zhao2014direct}, for which an implementation is available. The method focuses only on the structure of the covariance; it uses no external information on such structure and it has been developed around estimation consistency.  It follows that this method is not directly comparable with our method, and its relative performance is to be interpreted with caution.

The implementation of the differential network method was obtained from the github account of the corresponding author of \cite{zhao2014direct}.  Cross validation and $L_\infty$ were chosen as  tuning criteria. The output of this method is  an estimate of the difference between two precision matrices. To facilitate comparison with our method, we focused on  the differential network given by a subset of non zero elements of the estimated difference.  A variable was deemed important if the associated node belonged to the estimated differential network, i.e. if at least one edge of the differential network featured  the  node in question.   In this case, the true differential network consists of a single edge joining nodes 4 and 6.  Variables deemed important by this method should thus coincide with the minimal seed set.  

Rows labeled `Differential network'  in Table \ref{tablelabel}, report the percentage of times a variable belonged to the set of important variables according to the differential network method. The method flags nodes 4 and 6 to be relevant  also for the smallest sample size (around 85\% of times for $n=200)$.  However, the rate of a false discovery is much higher, around 40\% across the remaining nodes, and does not seem to be decreasing with increasing sample size. Note that this is not in conflict with the consistency of the estimator of  \cite{zhao2014direct}, since the estimated non-zero elements are getting smaller in absolute value (results not reported here) and converge to zero with increasing sample size. 

\begin{figure}
\centering
	\includegraphics[width = 0.4\linewidth]{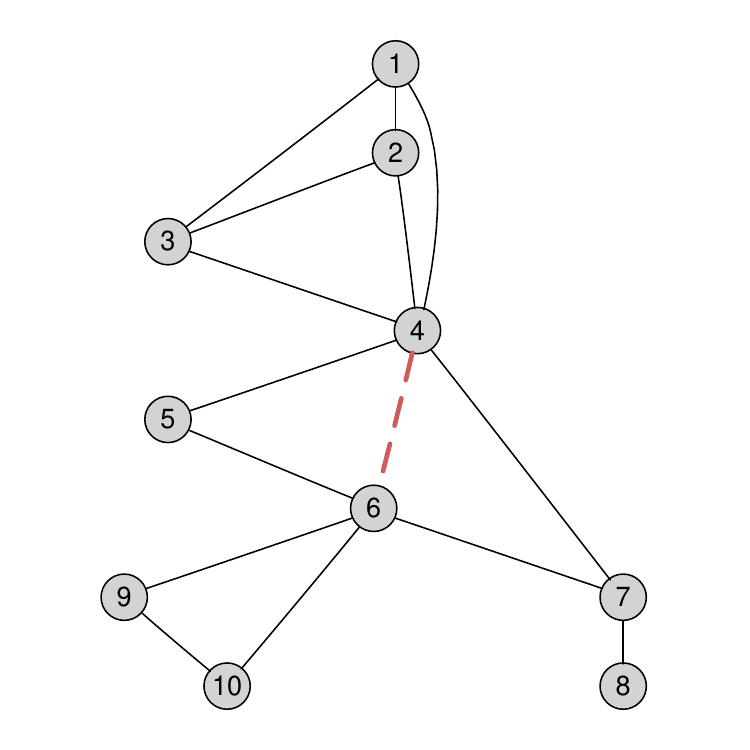}
	\caption{An undirected graph used in  Simulation study 2. Edge (4,6) is present in condition 1, and absent in condition 2.  }
	\label{differentstructure}
\end{figure}

\begin{table}
	\def~{\hphantom{0}}
	\caption{	\label{tablelabel}Simulation study 2: percentage of times (\%) a node is found to belong to $D$ or a differential network.Monte Carlo standard error of estimates is bounded by $2.2\%$.}{%
		\begin{tabular}{rlrrrrrrrrrr}
			%\\
			&	&\multicolumn{10}{c}{Node}\\[5pt]
			&     & 1& 2 & 3 & 4 & 5 & 6 & 7 & 8& 9& 10\\[7pt]
			\multirow{ 2}{*}{$n=200$}	&		 Seed set& $1$ & $1$& $1$& $22$& $1$& $25$& $3$& $1$&$1$ &$1$\\
			& Differential network  &$34$ &$39$&$40$&$86$&$44$&$85$&$51$&$35$&$40$&$37$\\[0.5cm]
			\multirow{ 2}{*}{$n=300$}&		 Seed set&$1$ &$1$&$1$& $46$&$1$&$47$&$0$&$0$&$1$ &$1$\\
			& Differential network  &39 &37&37&93&51&94& 50&36&44&42\\[0.5cm]
			\multirow{ 2}{*}{$n=500$}&		 Seed set&$2$ &$2$&$2$&$86$&$2$&$86$&$2$&$1$&$0$&$0$ \\
			& Differential network  & 42&42&40&99&50&99&56&36&46&46			
	\end{tabular}}
\end{table}

\section{Biological validation}
\label{rd}
%\subsection{Rationale}
\label{realdata1}

Genes and gene products cluster into functionally connected pathways, i.e. networks of biological interactions that describe their basic dynamics \citep{kanehisa:2000}. A large literature has developed around the problem of detecting statistically significant dysregulations of pathways in different experimental conditions \citep{goeman2004global, hummel2008globalancova,tsai2009multivariate},  but translating  detected dysregulations into claims about their origin is a challenging task. Chromosomal  rearrangements offer a possible explanation.
Chromosome rearrangements  initiate various alterations of the regulation of gene expression through a variety of different mechanisms.  For this reason, when comparing populations with and without  a given gene rearrangement, sound inferential tools usually flag most pathways including genes with the rearrangement as statistically different. What we should expect from tools calibrated to detect the source of dysregulation is that they go as close as possible to the rearranged genes. This is the reason why we consider known chromosomal rearrangements as ideal case studies to explore the power of our  procedure on real, complex and noisy data.

As an example, consider the BCR/ABL fusion gene, formed by rearrangement of the breakpoint cluster region (BCR) on 
chromosome~22 with the c-ABL proto-oncogene on chromosome~9. This rearrangement has been postulated to be responsible for the development of leukemia
and is present in  all chronic myelogenous leukemia patients. It is also identified in some cases of 
acute lymphocytic leukemia (ALL), in which it is associated with poor prognosis.

We consider a well-known   dataset \citep{chiaretti2005gene} 
available  from an \texttt{R} package \texttt{ALL}\citep{ALL}. Data refer to gene expression signatures of two groups of ALL patients: a first group  of  
37 subjects with BCR/ABL gene rearrangement, and a second group of 41 subjects without the BCR/ABL gene 
rearrangement. In what follows, we will consider the  Chronic myeloid leukemia pathway, shown in Figure \ref{fig:ALLpathway} in Appendix,  a pathway whose functioning is highly impacted 
by BCR and ABL genes.

To derive the underlying undirected graph, we used the \texttt{R} package \texttt{graphite} 
\citep{graphite}, which transforms KEGG pathways into  graph objects.  We moralized and triangulated this graph to  obtain 
a decomposable graph. For  graph operations, we relied on the package \texttt{gRbase} \citep{grbase}.  The obtained graph 
consists of three connected components, and for illustration purposes, we restricted our attention to the largest connected 
component,  consisting of 27 nodes and 16 cliques, shown in Figure \ref{fig:ResultsChimera} (colors can be ignored for now). 
The number associated to each node is a unique  gene identifier from the Entrez Gene database at the National Center for 
Biotechnology Information \cite{maglott2005entrez}. Note that nodes 25 and 613 represent, ABL and BCR genes, respectively.

The global hypothesis of equality of distributions in the two groups is rejected by the 
likelihood ratio test ($p$\,-value $=2.06\times 10^{-11}$). 
To estimate $\hat{D}_G$, we  decomposed the 
graph into a succession of cliques. 
There are 16 cliques, and thus 16 decompositions of the global null hypothesis,  and  41 unique local hypotheses.  We controlled the familywise error rate at $5\%$ level  by the min$P$ 
method with $B=1640$ permutations (the minimal number recommended by the \texttt{SourceSet} package).  We have thus relied on permutation, rather than asymptotic $p$-values.    Obtained $p$-values are shown in Table \ref{tab_chimera2}. The threshold found by min$P$ method was $2.4\times 10^{-3}$. The resulting estimate   is represented  in Figure \ref{fig:ResultsChimera}. Highlighted nodes 
(either gray or red) belong to cliques that result significantly different in two conditions, while the red nodes 
form the estimated graphical seed set $\hat{D}_G=\left\{25,613,6776\right\}$. These three genes, thus, explain the marked difference  between the two groups, but their effect does not seem to propagate towards other genes in the network (the majority of white nodes in Figure \ref{fig:ResultsChimera}). 

\begin{figure}
	\centering
	\includegraphics[width=0.6\textwidth]{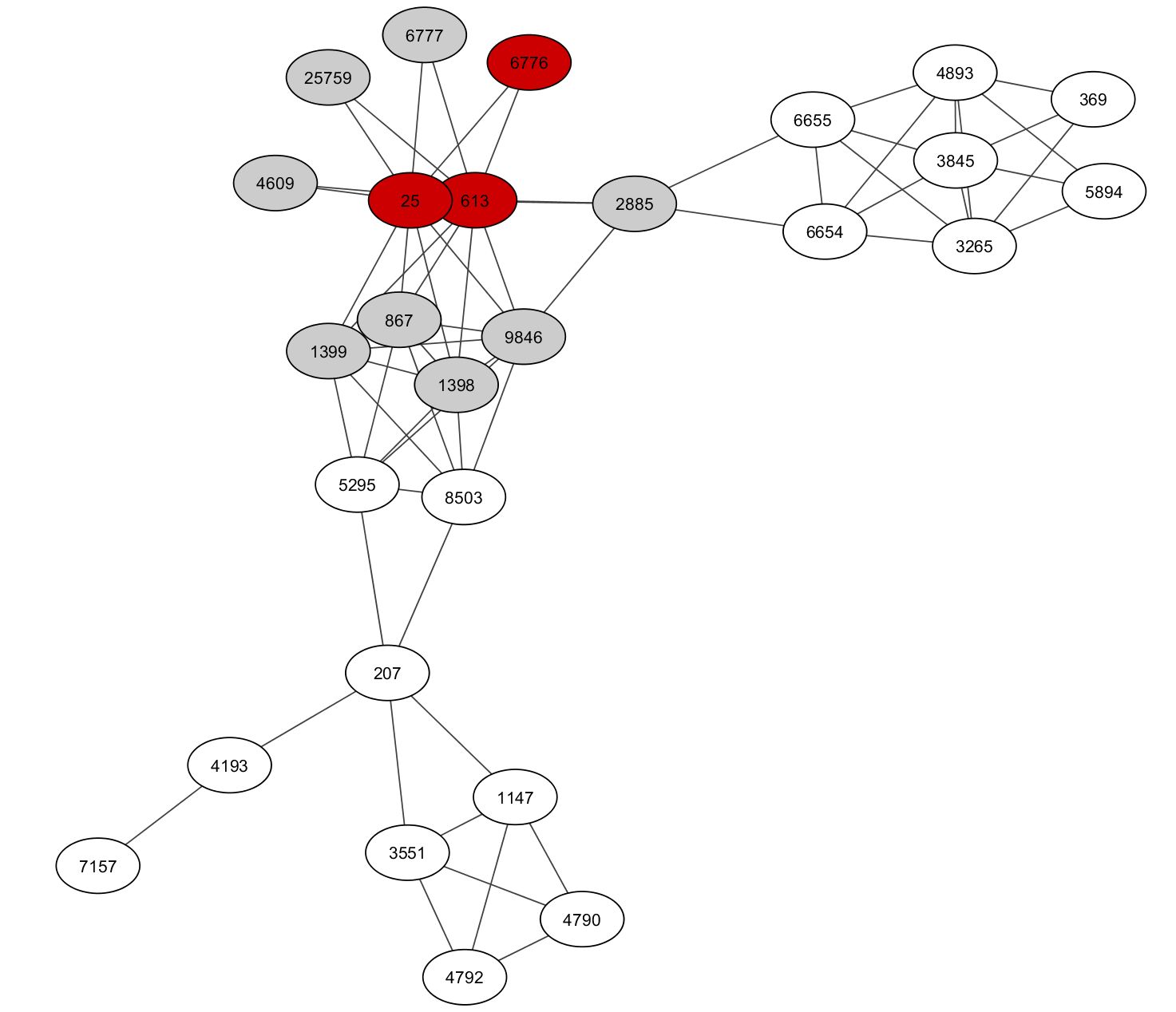}
	\caption{An undirected graph representing  the Chronic myeloid leukemia pathway. Genes belonging to cliques for which the hypothesis of equality of distributions is rejected are highlighted. Genes  belonging to the estimated graphical seed set are colored red. }
	\label{fig:ResultsChimera}
\end{figure}

\section{Discussion}
\label{discussion}
{
	Two sample testing problem we consider is closely related to the problem of variable selection in a logistic regression.  When a predictor is a $p$-dimensional random vector $X$ and the output is a class label (1 or 2),   the minimal seed set coincides with the Markov blanket of the response.

	Modularity of graphical modes is usually  considered with regards to  density factorization or parameter estimation. Theorem 1 mirrors this property in the hypothesis testing setting within the framework of strong meta Markov models, and although conceptually simple, we were unable to find this result in the literature.  The strong meta Markov assumption  is a strong assumption, however,  the two families most often encountered in practical applications,  that of Gaussian graphical models and graphical log-linear models, fall within this framework. 
	
	The presented approach estimates the graphical seed set which might be larger than the minimal seed set. An open question regards a potential  two-step procedure, in which clique grained decomposition is followed by additional tests aiming at identifying $\hat{D} \subseteq \hat{D}_G$.  Statistical properties of such a procedure are far from trivial, and  we leave this question for future research.
	
	Our approach is based on the assumption that the graphical structure is known, either derived from relevant subject matter considerations or estimated  from previous studies.  When this is not the case,  finding ways to combine learning of the graphical structure with the presented approach in an efficient way, while controlling the desired error rate, represents a methodological challenge  that  awaits further research.

	%Instead of performing tests of the form $X_{C_j\setminus S_j}\mid X_{S_j}$ and combining them according to 	\eqref{graphicalseedset}, one could  consider conditioning each node on the set of its neighbors, i.e. $X_v \mid X_{ \mbox{ne}(v)}$, where $\mbox{ne}(v)$ is the set of neighbors of $v$ in $G$. If equality of these conditional distributions is rejected, one would conclude that $v \in D$.   In that case, the assumption of decomposability is no longer required.  Nevertheless, due to the the presence of the so-called hub nodes in gene networks, the former approach is preferable in the particular application setting we considered.  Namely, neighborhoods of hub nodes can be much larger than the size of the largest clique, which in the presence of limited sample size can significantly affect the power of the employed statistical tests.  

}

% BibTeX users please use one of
\bibliographystyle{apalike}      % basic style, author-year citations
\bibliography{bibHyper_revision.bib}  % name your BibTeX data base

\begin{thebibliography}{}

\bibitem[Anderson, 2003]{anderson2003introduction}
Anderson, T.~W. (2003).
\newblock {\em An introduction to multivariate statistical analysis}.
\newblock Wiley, New Jersey.

\bibitem[Azzalini, 2021]{skewnormalR}
Azzalini, A. (2021).
\newblock {\em The {R} package \texttt{sn}: The Skew-Normal and Related
  Distributions such as the Skew-$t$ and the SUN (version 2.0.0).}
\newblock Universit\`a di Padova, Italia.

\bibitem[Azzalini and Capitanio, 1999]{azzalini1999statistical}
Azzalini, A. and Capitanio, A. (1999).
\newblock Statistical applications of the multivariate skew normal
  distribution.
\newblock {\em Journal of the Royal Statistical Society: Series B (Statistical
  Methodology)}, 61(3):579--602.

\bibitem[Barndorff-Nielsen, 2014]{barndorff2014information}
Barndorff-Nielsen, O. (2014).
\newblock {\em Information and exponential families in statistical theory}.
\newblock John Wiley \& Sons, New York.

\bibitem[Capitanio et~al., 2003]{capitanio2003graphical}
Capitanio, A., Azzalini, A., and Stanghellini, E. (2003).
\newblock Graphical models for skew-normal variates.
\newblock {\em Scandinavian Journal of Statistics}, 30(1):129--144.

\bibitem[Chiaretti et~al., 2005]{chiaretti2005gene}
Chiaretti, S., Li, X., Gentleman, R., Vitale, A., Wang, K.~S., Mandelli, F.,
  Foa, R., and Ritz, J. (2005).
\newblock Gene expression profiles of {B}-lineage adult acute lymphocytic
  leukemia reveal genetic patterns that identify lineage derivation and
  distinct mechanisms of transformation.
\newblock {\em Clinical Cancer Research}, 11(20):7209--7219.

\bibitem[Dawid and Lauritzen, 1993]{dawid1993hyper}
Dawid, A. and Lauritzen, S. (1993).
\newblock Hyper {M}arkov laws in the statistical analysis of decomposable
  graphical models.
\newblock {\em The Annals of Statistics}, 21(3):1272--1317.

\bibitem[Del~Sol et~al., 2010]{del2010diseases}
Del~Sol, A., Balling, R., Hood, L., and Galas, D. (2010).
\newblock Diseases as network perturbations.
\newblock {\em Current Opinion in Biotechnology}, 21(4):566--571.

\bibitem[Dethlefsen and H{\o}jsgaard, 2005]{grbase}
Dethlefsen, C. and H{\o}jsgaard, S. (2005).
\newblock A common platform for graphical models in {R}: The {gRbase} package.
\newblock {\em Journal of Statistical Software}, 14(17):1--12.

\bibitem[Frydenberg and Lauritzen, 1989]{frydenberg1989decomposition}
Frydenberg, M. and Lauritzen, S.~L. (1989).
\newblock Decomposition of maximum likelihood in mixed graphical interaction
  models.
\newblock {\em Biometrika}, 76(3):539--555.

\bibitem[Goeman et~al., 2004]{goeman2004global}
Goeman, J.~J., Van De~Geer, S.~A., De~Kort, F., and Van~Houwelingen, H.~C.
  (2004).
\newblock A global test for groups of genes: testing association with a
  clinical outcome.
\newblock {\em Bioinformatics}, 20(1):93--99.

\bibitem[Griffin et~al., 2018]{griffin2018detection}
Griffin, P.~J., Zhang, Y., Johnson, W.~E., and Kolaczyk, E.~D. (2018).
\newblock Detection of multiple perturbations in multi-omics biological
  networks.
\newblock {\em Biometrics}, 74(4):1351--1361.

\bibitem[Hudson et~al., 2009]{hudson2009differential}
Hudson, N.~J., Reverter, A., and Dalrymple, B.~P. (2009).
\newblock A differential wiring analysis of expression data correctly
  identifies the gene containing the causal mutation.
\newblock {\em PLoS Comput Biol}, 5(5):e1000382.

\bibitem[Hummel et~al., 2008]{hummel2008globalancova}
Hummel, M., Meister, R., and Mansmann, U. (2008).
\newblock Global{ANCOVA}: exploration and assessment of gene group effects.
\newblock {\em Bioinformatics}, 24(1):78--85.

\bibitem[Kanehisa and Goto, 2000]{kanehisa:2000}
Kanehisa, M. and Goto, S. (2000).
\newblock {KEGG:} {K}yoto {E}ncyclopedia of {G}enes and {G}enomes.
\newblock {\em Nucleic Acids Research}, 28(1):27--30.

\bibitem[Lauritzen, 1996]{lauritzen:96}
Lauritzen, S.~L. (1996).
\newblock {\em Graphical models}.
\newblock Clarendon Press, Oxford.

\bibitem[Li, 2009]{ALL}
Li, X. (2009).
\newblock {\em ALL: A data package}.
\newblock R package version 1.16.0.

\bibitem[Maglott et~al., 2005]{maglott2005entrez}
Maglott, D., Ostell, J., Pruitt, K.~D., and Tatusova, T. (2005).
\newblock Entrez {G}ene: gene-centered information at {NCBI}.
\newblock {\em Nucleic Acids Research}, 33(suppl 1):D54--D58.

\bibitem[Ritchie et~al., 2015]{ritchie2015limma}
Ritchie, M.~E., Phipson, B., Wu, D., Hu, Y., Law, C.~W., Shi, W., and Smyth,
  G.~K. (2015).
\newblock limma powers differential expression analyses for rna-sequencing and
  microarray studies.
\newblock {\em Nucleic acids research}, 43(7):e47--e47.

\bibitem[Robins et~al., 2003]{robins2003uniform}
Robins, J.~M., Scheines, R., Spirtes, P., and Wasserman, L. (2003).
\newblock Uniform consistency in causal inference.
\newblock {\em Biometrika}, 90(3):491--515.

\bibitem[Sales et~al., 2016]{graphite}
Sales, G., Calura, E., and Romualdi, C. (2016).
\newblock {\em graphite: GRAPH Interaction from pathway Topological
  Environment}.
\newblock R package version 1.20.1.

\bibitem[Tan, 1977]{tan1977distribution}
Tan, W. (1977).
\newblock On the distribution of quadratic forms in normal random variables.
\newblock {\em Canadian Journal of Statistics}, 5(2):241--250.

\bibitem[Thomas and Green, 2009]{thomas2009enumerating}
Thomas, A. and Green, P.~J. (2009).
\newblock Enumerating the junction trees of a decomposable graph.
\newblock {\em Journal of Computational and Graphical Statistics},
  18(4):930--940.

\bibitem[Tsai and Chen, 2009]{tsai2009multivariate}
Tsai, C.-A. and Chen, J.~J. (2009).
\newblock Multivariate analysis of variance test for gene set analysis.
\newblock {\em Bioinformatics}, 25(7):897--903.

\bibitem[Westfall and Young, 1993]{westfall1993resampling}
Westfall, P.~H. and Young, S.~S. (1993).
\newblock {\em Resampling-based multiple testing: Examples and methods for
  p-value adjustment}, volume 279.
\newblock John Wiley \& Sons, New York.

\bibitem[Xia et~al., 2015]{xia2015testing}
Xia, Y., Cai, T., and Cai, T.~T. (2015).
\newblock Testing differential networks with applications to the detection of
  gene-gene interactions.
\newblock {\em Biometrika}, 102(2):247--266.

\bibitem[Zhao et~al., 2014]{zhao2014direct}
Zhao, S.~D., Cai, T.~T., and Li, H. (2014).
\newblock Direct estimation of differential networks.
\newblock {\em Biometrika}, 101(2):253--268.

\bibitem[Zhu and Bradic, 2016]{zhu2016two}
Zhu, Y. and Bradic, J. (2016).
\newblock Two-sample testing in non-sparse high-dimensional linear models.
\newblock {\em arXiv preprint arXiv:1610.04580}.

\end{thebibliography}

\appendix

\section{Undirected graphs basics}
\label{preliminaries}
Here, we briefly review key graph notions relevant for our work. For a detailed exposition, see \cite{lauritzen:96}.

Consider an undirected graph $G=(V,E)$ where  $V$ is a  set of nodes and  $E$ is a set of edges. A subset of vertices $A$ defines an induced subgraph $G_A=(A, E \cap A\times A)$.  A subgraph is said to be complete if all pairs of its vertices are connected in $G$. A clique is a maximal complete subgraph, that is, it is not a subgraph  of any other complete subgraph. 
Two disjoint subsets $A,B \subset V$ are said to be \textit{separated} by a subset $S$ (disjoint from $A$ and $B$) if all paths from $A$ to $B$ contain vertices from $S$. A graph $G$ is decomposable if and only if the set of cliques of $G$ can be ordered so as to satisfy the \textit{running intersection property}, that is, for every $i=2,\ldots,k$, if $S_i= C_i \cap \bigcup_{j=1}^{i-1} C_j $, then $ S_i \in C_l$, for some $l <i-1$.
Although this ordering is generally not unique,  the  structure of the graph uniquely determines the set of cliques $\left\{C_1,\ldots,C_k\right\}$ and the set of \textit{separators} $\left\{S_2,\ldots, S_k\right\}$. For ease of notation, it is often set $S_1=\varnothing$, so that the set of separators becomes $\left\{S_1,\ldots, S_k\right\}$.

\section{Graphical seed set: illustrative example}\label{example}

We use a small undirected graph $G$  shown in Figure \ref{Smallex} to illustrate  
possible  relations between the minimal seed set and the graphical seed set.  Graph $G$ consists of  cliques $C_1=\left\{1,2,3\right\}$ and $C_2=\left\{3,4,5\right\}$ separated by $S=\left\{3\right\}$. In the left panel, the minimal seed set $D=\left\{3\right\}$ coincides with the separator $S$, and thus with the graphical seed set as well.  In the middle panel, the minimal seed set is $D=\left\{1,3\right\}$. Node $2$ is not separated from $D$ by any separator in $G$ (in this case, neither $S$ nor empty set). Nodes 4 and 5 are separated from $D$ by $S$, since all paths from 4 and 5 to $D$ pass through $S$. The graphical seed set is thus $D_G=\left\{1,2,3\right\}$. In the right panel, the minimal seed set is $D=\left\{1,4\right\}$. None of the remaining nodes 2, 3 and 5 is separated from $D$ by a separator in $G$, and so the graphical seed set is the entire set of nodes $D_G= \left\{1,2,3,4,5\right\}$. \qed
	\begin{figure}
		\begin{minipage}{0.3\textwidth}
			\center
			%	\tikzstyle{background rectangle}=[fill=PapayaWhip]
			\begin{tikzpicture}[xscale=0.35,yscale=0.35,
			inner sep=1.1mm, auto]
			\draw (0,0) node (3) [draw, circle, fill=Crimson]  {\small$3$}
			node (1) [draw] at (-2.5,3.5) [circle, draw, fill =Silver] {\small$1$}
			node(2) [draw] at (2.5,3.5) [circle, draw,  fill =Silver] {\small$2$}
			node(4)[draw] at (-2.5,-3.5) [circle,fill =Silver] {\small$4$}
			node(5)[draw] at (2.5,- 3.5) [circle,fill =Silver] {\small$5$};
			\draw   (1) -- (2); 
			\draw (1) -- (3);
			\draw    (2) -- (3);
			\draw    (3) -- (4);
			\draw   (3) -- (5);
			\draw   (4) -- (5);
			\end{tikzpicture}
		\end{minipage}	\begin{minipage}{0.3\textwidth}
		\center
		%	\tikzstyle{background rectangle}=[fill=PapayaWhip]
		\begin{tikzpicture}[xscale=0.35,yscale=0.35,
		inner sep=1.1mm, auto]
		\draw (0,0) node (3) [draw, circle, fill=Crimson]  {\small 3}
		node (1) [draw, circle, fill=Crimson] at (-2.5,3.5) [circle] {\small $1$}
		node(2) [draw] at (2.5,3.5) [circle, fill= Salmon] {\small$2$}
		node(4)[draw] at (-2.5,-3.5) [circle,fill =Silver] {\small$4$}
		node(5)[draw] at (2.5,- 3.5) [circle,fill =Silver] {\small$5$};
		\draw    (1) -- (2); 
		\draw   (1) -- (3);
		\draw   (2) -- (3);
		\draw (3) -- (4);
		\draw  (3) -- (5);
		\draw    (4) -- (5);
		\end{tikzpicture}
	\end{minipage} \begin{minipage}{0.3\textwidth}
	\center
		\begin{tikzpicture}[xscale=0.35,yscale=0.35,
	inner sep=1.1mm, auto]
	\draw (0,0) node (3) [draw, circle, fill=Salmon]  {\small 3}
	node (1) [draw, circle, fill=Crimson] at (-2.5,3.5) [circle] {\small $1$}
	node(2) [draw] at (2.5,3.5) [circle, fill= Salmon] {\small$2$}
	node(4)[draw] at (-2.5,-3.5) [circle,fill =Crimson] {\small$4$}
	node(5)[draw] at (2.5,- 3.5) [circle,fill =Salmon] {\small$5$};
	\draw   (1) -- (2); 
	\draw    (1) -- (3);
	\draw    (2) -- (3);
	\draw     (3) -- (4);
	\draw   (3) -- (5);
	\draw   (4) -- (5);
	\end{tikzpicture}
\end{minipage}
\caption{Minimal seed sets (dark red) and associated graphical seed sets (difference between the two in light red).} \label{Smallex}
\end{figure}

The above example illustrates that  $D_G$ might be larger than the set of interest, i.e.  the minimal  seed set $D$.   In most situations, however, the graphical seed set will allow us to zoom in on the set $D$,  while exploiting the modularity of the graphical structure.

\section{Technical details and proofs}

\subsection*{Proof of Proposition \ref{gss}}
Let $P= \bigcap_{_{i=1}}^{k}\bigcup_{\left\{j: \,\,d_{i,j}^*=1\right\}} C_{i,j}$. %
Then if $ v\in P$, for each decomposition $i$, there is at least one clique $C_{i,j}$ containing $v$ such that $d_{ij}^*=1$. If $C_{i,l}$ denotes the first clique in the $i$-th decomposition containing $v$,  we know that $v$ belongs to $R_{i,l}$, otherwise $C_{i,l}$ would not be the first clique containing $v$. Consider a tree of cliques constructed from the perfect ordering $C_{i,1}, \ldots, C_{i,k}$ in the following fashion. The perfect ordering property guarantees that for each clique $C_{i,j}$, the intersection with the union of predecessor cliques  is contained within a single clique, that is
\begin{equation}\label{perfect}
	C_{i,j} \cap \bigcup_{m=1}^{j-1} C_{i,m} \subset C_{i,n}, \quad \mbox{for some } n=1,\ldots,j-1.
\end{equation} 
Then set $C_{i,n}$ to be a parent of $C_{i,j}$ in the clique tree.  Parent clique might not be unique, but without loss of generality, we take   the first clique  satisfying the assumption \eqref{perfect}. Then all cliques containing $v$ other than $C_{i,l}$ must be descendants of $C_{i,l}$. 
We further notice that  if $d_{i,l}^*=0$, then $d_{i,m}^*=0$ for all its descendants. %The inadmissibility of a (0,1) pattern is easily proven by contradiction and  the connection between $d_{i,j}^*$ and graph separations.  
This implies that necessarily $d_{i,l}^*=1$ and $S_{i,l}$ does not separate $v$ from $D$. Since this is true for all decompositions, there can be no separator that separates $v$ from $D$, implying that $v$ belongs to  $
 D_G$.  
 
We have proven $v\in P \Rightarrow v\in D_G$, but all considered implications remain valid if reversed, so that $v\in P \Leftrightarrow v\in D_G$.
\qed
\\

\subsection*{Proof of Proposition \ref{prop2}}
Choose $\alpha_n=(1-F_U(n^d))$, with $0<d<1/2$, $U \sim \chi^2_f$, and let $q_n = F_U^{-1}(\alpha_n).$ 
Under the null hypothesis, $\lambda_{i,j;n} \stackrel{d}{\rightarrow} \lambda$, with $\lambda \sim \chi^2_f.$ Thanks to the Slutsky theorem, we can write
$$\mathbb{P}_{(\theta^{(1)}, \theta^{(2)}) \in \Theta_0}(\phi_{i,j}(n) = 1)= \mathbb{P}_{(\theta^{(1)}, \theta^{(2)})\in \Theta_0}\left(\frac{\lambda_{i,j;n}}{n^d} > 1\right)  \longrightarrow 0. $$

Furthermore,  for each $(\theta^{(1)}, \theta^{(2)}) \in \Theta_1,$ it is known that the log likelihood ratio test
is degenerate with the order $O(\sqrt{n}).$ With the choice of $\alpha_n$ above,
$$\mathbb{P}_{(\theta^{(1)}, \theta^{(2)})\in \Theta_1}(\phi_{i,j}(n) = 0) 	=\mathbb{P}_{(\theta^{(1)}, \theta^{(2)}) \in \Theta_1}\left(\frac{\lambda_{i,j;n}}{n^d} < 1\right)  \longrightarrow 0. \qed $$

 \subsection*{Proof of Theorem \ref{t2}}
	For a fixed $i,$ we have that $\phi_i(n) = (\phi_{i,1}(n), \ldots, \phi_{i,k}(n)) \rightarrow d_i^* = (d_{i,1}^*, \ldots,  d_{i,k}^*),$  since  the  inequality $$\mathbb{P}_{(\theta^{(1)}, \theta^{(2)})}(\phi_{i}(n)=d^*_{i})\geq 1- \sum_{j=1}^k\mathbb{P}_{(\theta^{(1)}, \theta^{(2)})}(\phi_{i,j}(n) \neq d^*_{i,j}) $$ in conjunction with Proposition   \ref{prop2} implies   $\mathbb{P}_{(\theta^{(1)}, \theta^{(2)})}(\phi_{i}(n)=d^*_{i})\longrightarrow 1$.  Convergence of $\hat{D}_G$ to $D_G$ follows straightforwardly.\qed
	
{\color{black}\section{Simulation studies}

\subsection{Skew-normal graphical models}\label{SMSkew}
To investigate the question of robustness of the proposed method in the Gaussian context, we conducted a simulation study with data sampled from a skew-normal graphical model \citep{capitanio2003graphical}. We recall that a $p$-dimensional random vector $X$ is said to follow a multivariate skew-normal distribution if its density is of the form \citep{azzalini1999statistical}:
$$
\phi_p(x;\mu, \Omega) \Phi\left(\alpha_0 + \alpha^T \omega^{-1}(x-\mu)\right)/ \Phi(\tau), \quad x \in \mathbb{R}^p,
$$
where \begin{itemize}
\item $\phi_p(x;\mu, \Omega)$ is the probability density function of the $p$-dimensional normal distribution $N_p(\mu, \Omega)$;
\item $\Phi$ is the cumulative distribution function of the standard normal distribution $N(0,1)$;
\item $\mu \in \mathbb{R}^p$, $\tau \in \mathbb{R}$ and $\Omega$ is a $p\times p$ full rank variance matrix;
\item $\omega = \text{diag}\left(\Omega_{11}, \Omega_{22}, \ldots, \Omega_{pp} \right)^{1/2}$
\item $\alpha\in \mathbb{R}^p$ is a shape parameter and $\alpha_0 = \tau(1+\alpha^T\omega^{-1}\Omega\omega^{-1}\alpha)^{1/2}$.
\end{itemize}
\cite{capitanio2003graphical} showed that $X_i$ and $X_j$ are conditionally independent given the remaining components of $X$ if  and only if
\begin{equation}
\label{sngm}
\Omega^{ij} =0 \quad \text{and} \quad \alpha_i\alpha_j=0,
\end{equation}
where $\Omega^{ij}$ is the element $(i,j)$ of the matrix $\Omega^{-1}$. }

{\color{black}We considered  graph $G$ of Simulation study 2, shown also in Figure \ref{graph_ss3}.}
\begin{figure}\centering
\includegraphics[width = 0.6\linewidth]{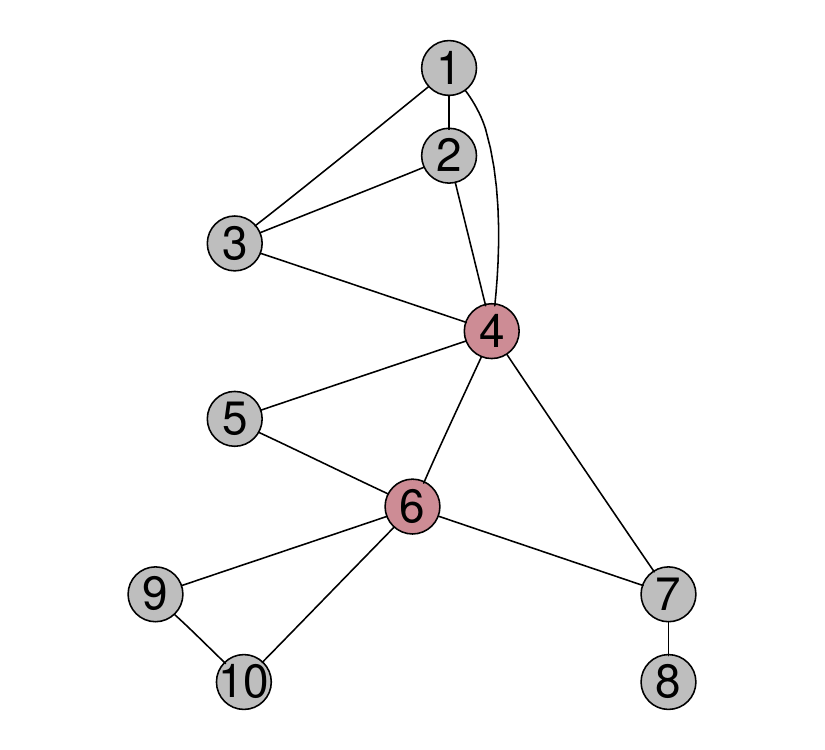}
\caption{\label{graph_ss3}Graph used in the skew-normal simulation study. Seed set is highlighted. }
\end{figure}
{\color{black}The seed set was set to $D=\left\{4,6\right\}$. Components of the location parameter $\mu$ were drawn from $N(0.5, 1)$. Matrix $\Omega$ was obtained from a matrix with 1s on a diagonal and $0.6$ off the diagonal that was modified so that the its inverse reflects the missing edges of $G$. In the second condition, the location parameter of the seed set variables $(\mu_4,\mu_6)$ was multiplied by a $1.5$ and their scale parameter $(\Omega_{44}, \Omega_{66})$  was decreased by $50$\%. The parameter of skewness $\alpha\in \mathbb{R}$, assumed  shared across the two conditions,  varied in the  set $\left\{0, 1, 2, 4, 8, 16, 20 \right\}$.  In particular, the skewness of variables $X_1, \ldots,X_6, X_9,X_{10}$ was set to  $\alpha$ or $-\alpha$ with the sign randomly chosen, while marginal distributions of $X_7$ and $X_8$ were symmetric, so that the condition \eqref{sngm} is satisfied for all pairs of nodes not connected in $G$,  ensuring that the conditional independence relations reflected in $G$ remain preserved. Note that the case $\alpha = 0$ corresponds to the normal distribution and allows us to study the impact of skewness.  The marginal distributions of the ten variables for $\alpha =8$ is shown in Figure \ref{ss3_2}.%was obtained from real data. Namely, we randomly sampled 8 genes from a dataset  \verb|BRCA_data| containing 348 measurements of 654 genes, available in R package \texttt{r.jive}  \citep{rjive}. For these, we estimated the parameter of skewness  and used these as the for variables $X_1, \ldots,X_6, X_9,X_{10}$. For the remaining two variables we set $\alpha_7=\alpha_8=0$, so that the conditional independence relations reflected in $G$ remain preserved. Densities of the marginal distributions  of the components of $X$  (in the second condition) are shown in Figure \ref{ss3_2}.
\begin{figure}\centering
\includegraphics[width = 0.9\linewidth]{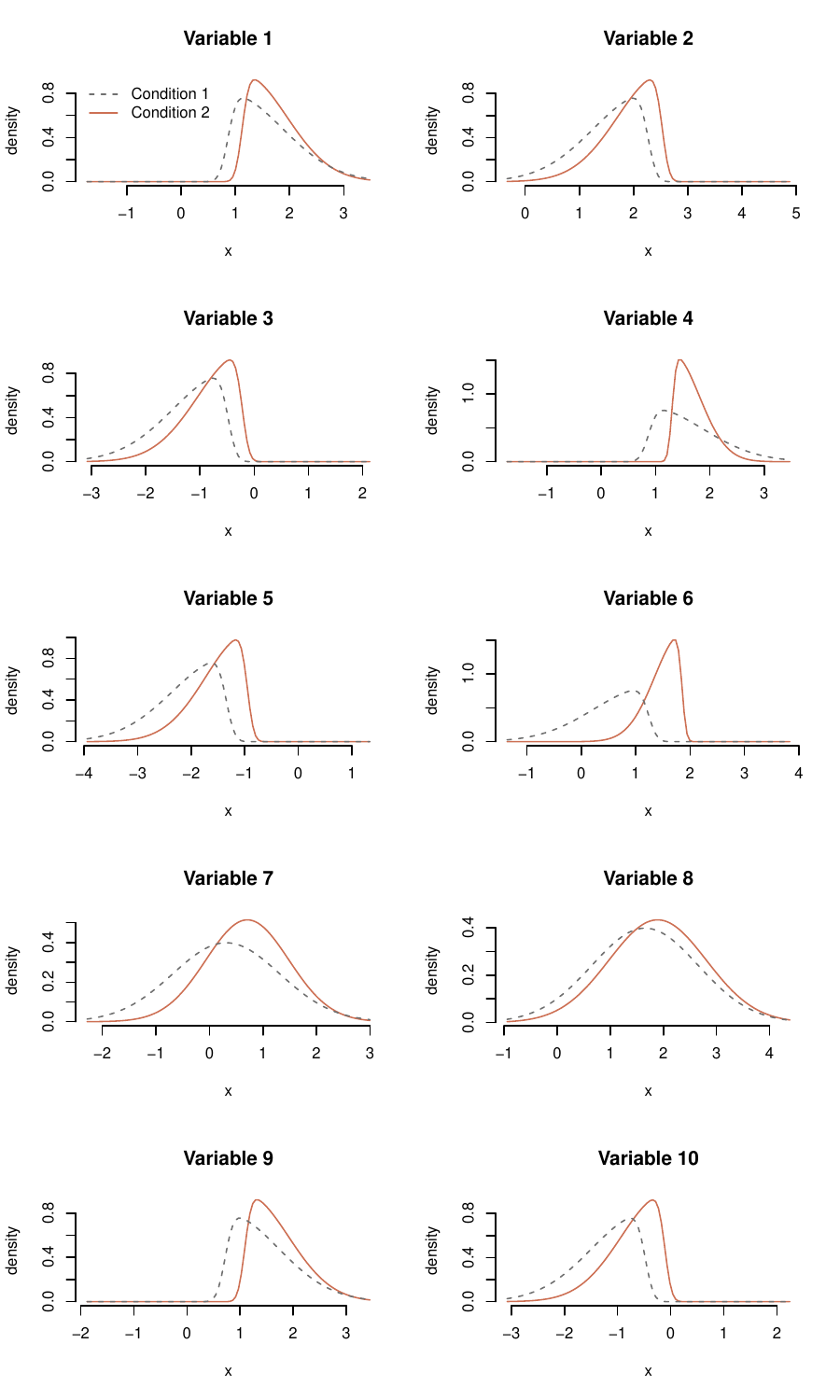}
\caption{\label{ss3_2}Marginal distributions of the 10 variables for $\alpha = 8$.}
\end{figure}

We generated random samples from multivariate skew-normal distributions with  R package \texttt{sn} \citep{skewnormalR}.
We considered three sample sizes $n_1=n_2 \in \left\{50, 100, 200\right\}$, and for each sample size we generated 500 pairs of datasets. As before, to evaluate the performance of the seed set estimating procedure, we computed the empirical power, defined as the frequency with which the seed set was correctly identified, i.e. $\hat{D}_G = D_G$, and the empirical familywise error rate, defined as the frequency with which $\hat{D}_G$ contained a false positive. Figure \ref{ss3_2} displays the results.

\begin{figure}\centering
\includegraphics[width = 0.75\linewidth]{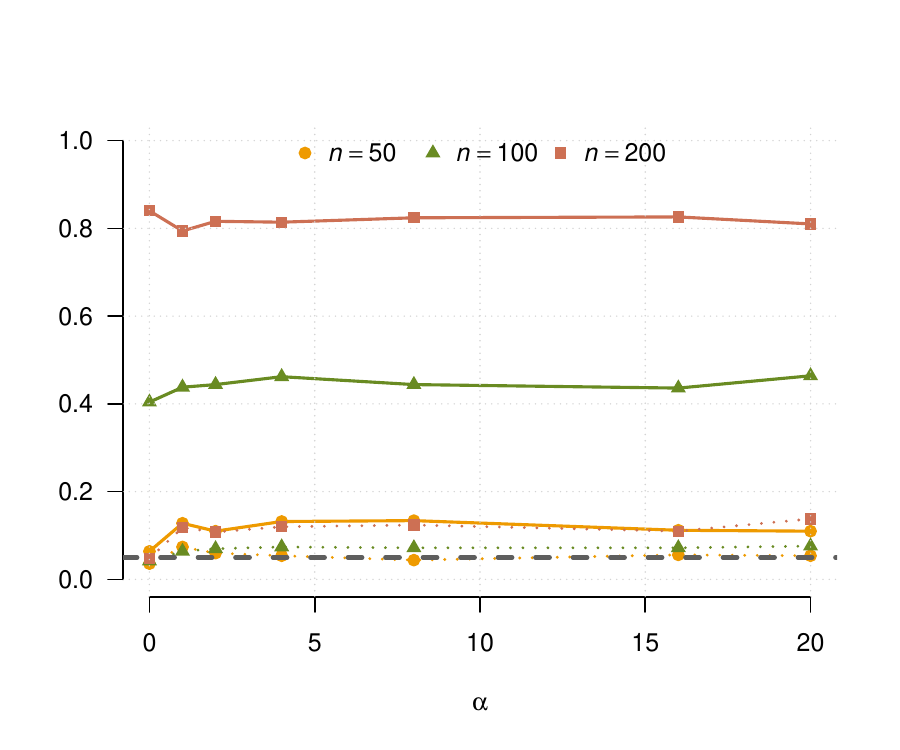}
\caption{\label{ss3}Empirical power (solid lines) and familywise error rate (dotted lines) of the graphical seed set estimating procedure as a function of the skewness parameter $\alpha$. Dashed horizontal line $y=0.05$, representing the nominal familywise error rate, is added for reference.}
\end{figure}

As expected, the empirical power is increasing with increasing sample size. More interestingly, the power does not seem to be much affected  by the skewness. On the other hand, the familywise error rate control is  compromised, but the increase is so slight  that it allows us to infer that the seed set estimating procedure is quite robust in the presence of skewness. 

It should be stressed that extra skewness is only one of the many forms that departures from normality can take. Nevertheless, when studying the properties of procedures in the graphical modelling context,  the family of skew-normal distributions has an important advantage over other continuous multivariate distributions:  we can explicitly, through  restrictions on the parameter space, link conditional independence relations with an undirected graph.  When this is not the case, it is difficult to disentangle the effect of non-normality from other forms of misspecification.  }

{\color{black}	
\subsection{Comparison with the network filtering approach of \cite{griffin2018detection}\label{SMcomparison}}

As already mentioned in the article, to the best of our knowledge, there are no methods that aim to estimate the seed set, as defined in this work. There are, however, approaches that aim to detect the origin of more specific types of perturbations. For instance,  \cite{griffin2018detection} focus on perturbations that affect the mean level. The Authors propose to search for the target of perturbation by applying the method of network filtering. They further propose a sequential multiple testing procedure for identifying  multiple perturbation targets.  The approach is implemented in the R package \texttt{mapggm} available from \texttt{https://github.com/paulajgriffin/mapggm}. In  what follows, we briefly describe the approach  and the assumed  perturbation model. 

Data in the control condition are assumed to come from a multivariate normal distribution that is Markov with respect to an unknown graph. The perturbation acts on its target(s) and changes its(their) mean. The effect of perturbation is then propagated through network connections so that further nodes result perturbed. The aim of detecting the site of the original perturbation is achieved in two steps. In the first step, data from the first condition are  used to estimate the covariance matrix and the graphical structure; in the second step, data from the second, i.e. perturbed, condition are transformed in the process of network filtering, and a testing procedure is used to identify the most likely sites of the original perturbation. 

To compare the seed set approach with the approach based on network filtering, we performed a simulation study based on the  graph $G$ shown in Figure \ref{graph_ss3}. We again set the seed set to $D=\left\{4,6\right\}$, but  in this case we perturbed the means of the two variables. In particular, data from the first condition are simulated from ${\sf N}(0,\Sigma)$, where 
$\Sigma$ is the covariance matrix obtained from a matrix with 1s on the main diagonal and $0.6$ off diagonal, modified so that its inverse has zeroes corresponding to the missing edges of $G$. Data from the second condition come from ${ \sf N}(\Sigma\mu, \Sigma)$, where $\mu\in \mathbb{R}^{10}$, such that its elements are equal to $\delta \in \mathbb{R}$ if they correspond to the perturbation targets, i.e. seed set,  and $0$ otherwise. 
Parameter  $\delta$ varied in the set  $\left\{0.5, 1, 2, 4, 8, 16\right\}$. 

When applying the network filtering approach, instead of estimating
network structure encoded in $\Sigma$ via penalized regression, we used the information
on the structure of  $G$, so that the comparison with the seed set approach is more balanced. For each $\delta$, we generated 1000 pairs of datasets with $n_1=n_2 = 50$.  We controlled familywise error rate at  $\alpha=0.05$; for the seed set approach with the max$T$ method as described in Section 3.4 of the article,   for the network filtering approach with the Bonferroni 
correction applied to the  node-wise $p$-values. 

We computed the empirical power for the two methods defined as the frequency with which 
\begin{itemize}
\item the true seed set was either correctly identified or covered by the seed set estimate;
\item  the set of detected perturbation targets, defined as a set of nodes with $p_{\text{adj}}\leq\alpha=0.05$, covered the true seed set. 
\end{itemize}
Similarly, the familywise error rate was estimated as the frequency with which the seed set estimate contained a false positive, and the frequency with which the set of detected perturbation targets included a false positive. The results are shown in Table \ref{multi}. 
 }

 \begin{table}[h]
	\caption{\label{mtp}Empirical power and familywise error rate  multiplied by $10^2$ for the seed set and the network filtering approach \citep{griffin2018detection}. Estimated familywise error rate exceeding the nominal level is highlighted.}
	\label{multi}
	\setlength{\tabcolsep}{10pt}
	\begin{center}
		\begin{tabular}{r|@{\hspace{0.25in}}rr@{\hspace{0.25in}}rr}
			& \multicolumn{2}{c}{Seed set} &  \multicolumn{2}{c}{Network filtering}\\
			\hline
			$	\delta$	 & Power & FWER & Power  & FWER \\
			\hline
			0.5	 &	      1.5   &    $3.5$    &    16.8   &  $\mathbf{39.5}$\\
			1   &         8.2   &    $3.9$   &   72.9   & $\mathbf{75.3}$\\
			2    &	    40.0   &    ${5.0}$   &   94.6  &  $\mathbf{98.8}$ \\
			4   &	    61.9   &    ${4.1}$     &   98.5 & $\mathbf{1.0}$\\
			8    &         70.0   &    ${3.8}$     &   98.8  &  $\mathbf{1.0}$\\
			16   &         71.8   &    ${3.6}$     &   99.1  &  $\mathbf{1.0}$\\
               \end{tabular}
	\end{center}
\end{table}
{\color{black}
The network filtering approach has more power than the seed set approach, with a particularly  striking difference  for the low values of $\delta$. However, the power advantage comes at the cost of losing type 1 error control: the actual familywise error rate for the network filtering approach is always above the nominal level $\alpha=0.05$. Furthermore, it  quickly reaches 1, which implies that for $\delta$ large enough, the set of detected targets will almost surely contain at least one false positive. A closer inspection shows that this behaviour is at least partially due to the estimation of $\Sigma$. Namely, the estimate obtained from the first condition is used in the second step of network filtering as a plug in estimate. As a consequence, although this strategy has asymptotic guarantees, in finite samples it  can lead to a significant inflation of  the type I error rate, as evidenced by this example. }

\begin{figure}[htb]
		\centering
		\makebox{\includegraphics[width=0.99\textwidth]{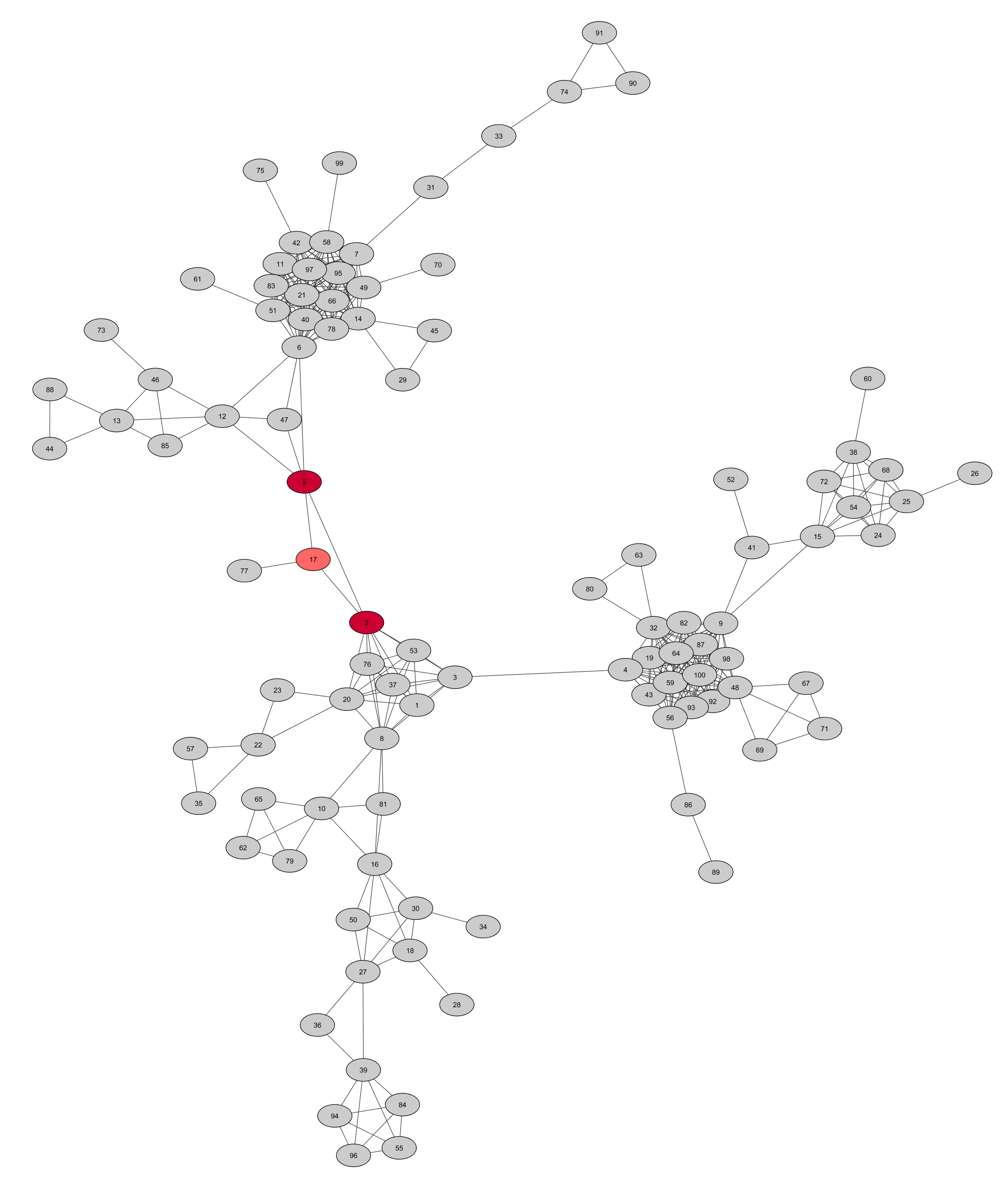}}
		\caption{Undirected graph used in  Simulation study 1.  The minimal seed set is set to $D=\left\{2,5\right\}$, shown in dark red, with the corresponding  graphical seed set $D_G=\left\{2,5,17\right\}$.}
		\label{fig:SimulationStudy}
\end{figure}

\begin{figure}
		\centering
		\begin{minipage}[t]{0.45\textwidth}
			\includegraphics[width=0.890\textwidth]{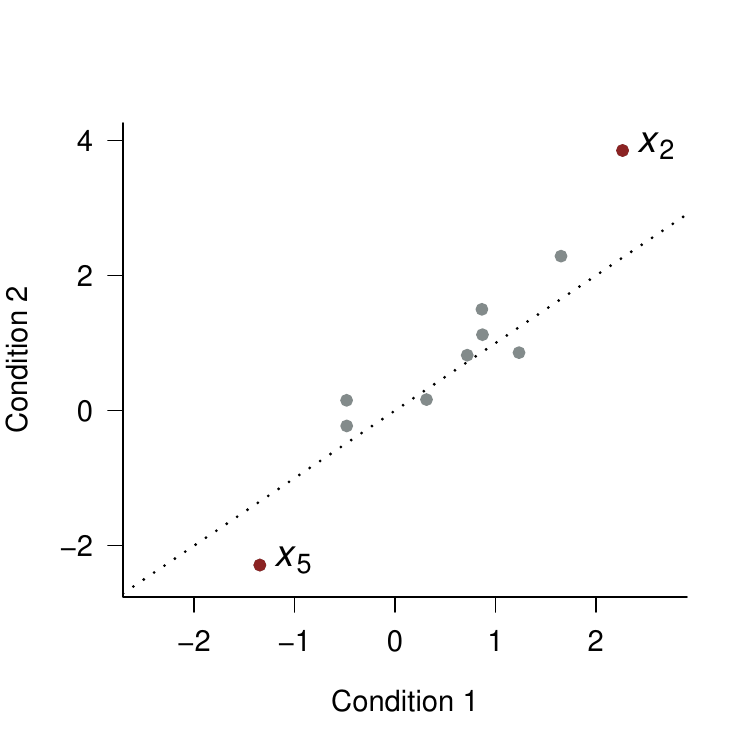}\\
		\end{minipage}
		\begin{minipage}[t]{0.45\textwidth}
			\includegraphics[width=0.890\textwidth]{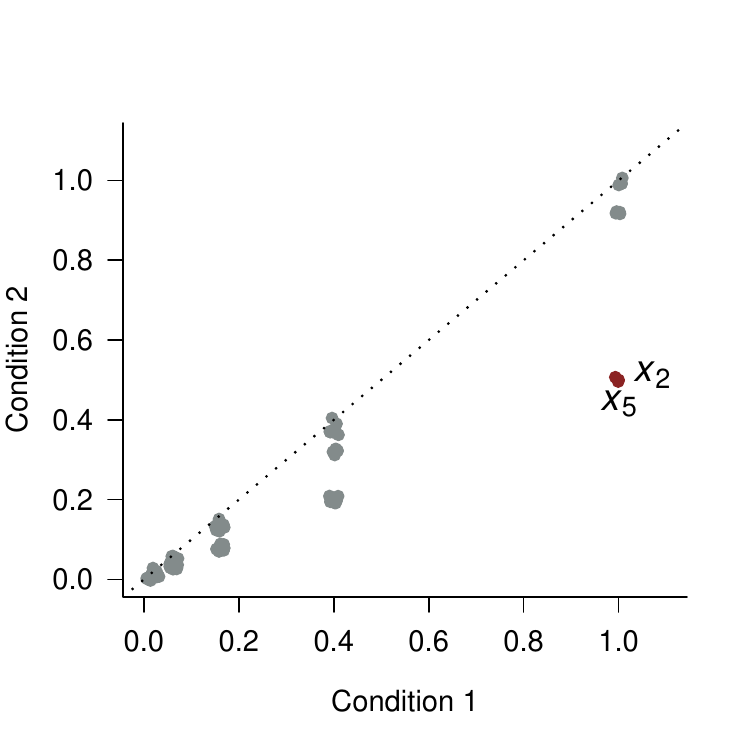}
		\end{minipage}
		\caption{Simulation study 1:  comparison of the parameters in two conditions. On the left, the means  of the first 10 variables, on the right, the associated variances.  Means and variances of the seed set variables are highlighted in red. A dotted $y=x$ line is added for reference. A small noise is added to the plotted points on the right to avoid  a complete overlap.}
		\label{vcmat}
\end{figure}

\begin{figure}
		\centering
		\includegraphics[width=0.9\textwidth]{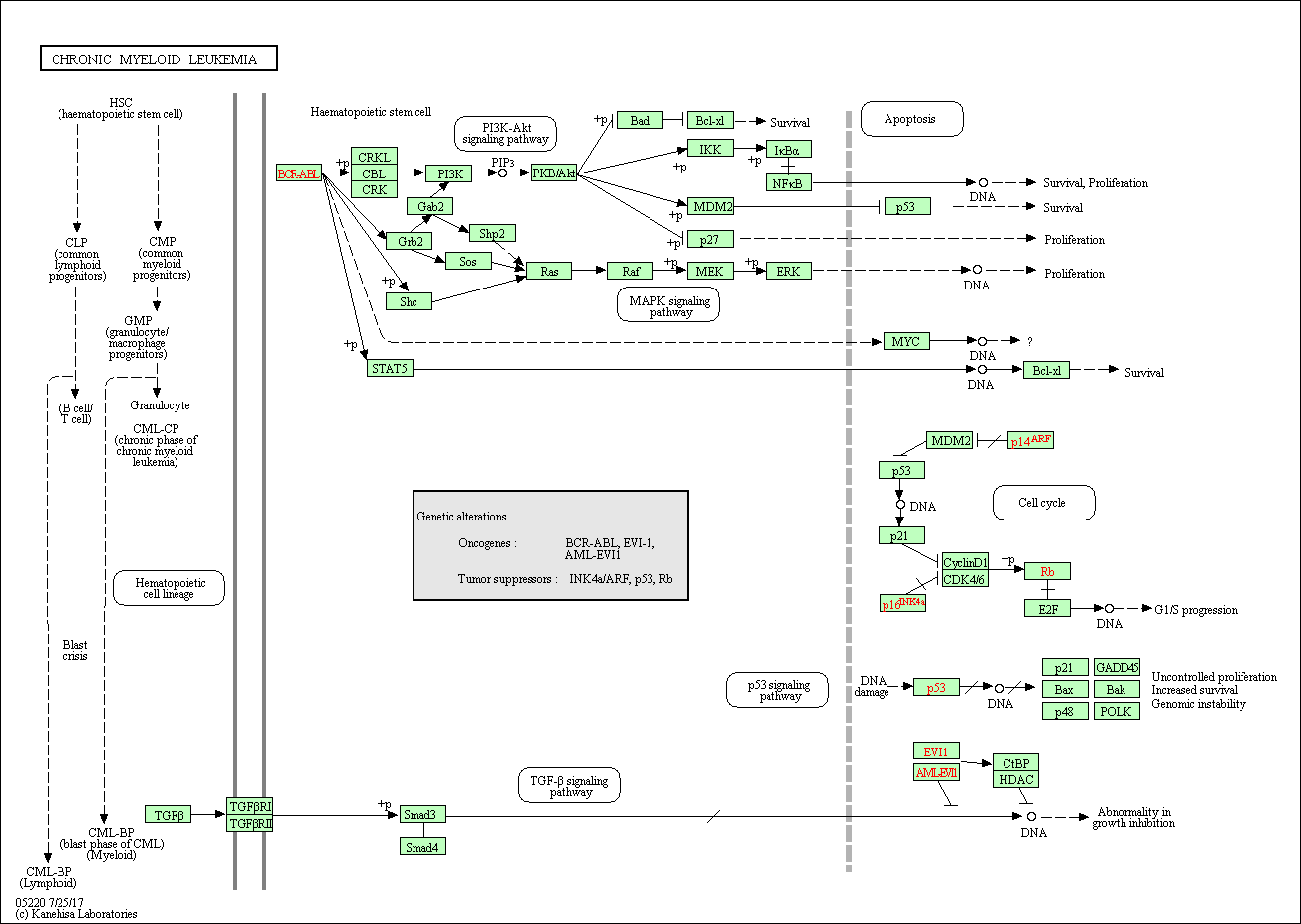}
		\caption{Chronic myeloid leukemia pathway from KEGG.}
		\label{fig:ALLpathway}
\end{figure}

\begin{table}
{\small
	\centering
		\caption{\label{tab_chimera2} Chronic myeloid leukemia dataset: results of local tests. 
			}
			\setlength{\extrarowheight}{.3em}
				\begin{tabular}{>{$}l<{$} >{$}l<{$} >{$}r<{$} >{$}l<{$} >{$}l<{$} >{$}r<{$}}
				\noalign{\vskip 0.1cm}
				\hline
				\noalign{\vskip 0.1cm}
				{\bf No.}&{\bf Test}& \boldsymbol p{\bf -value} &  {\bf No.} &{\bf Test}& \boldsymbol p\,{\bf -value}\\
				\noalign{\vskip 0.2cm}
				\hline
				1 &  1398,1399,25,613,867,9846 & 6.0\times10^{-4} & 21 & 25,613,6777 & 6.0\times 10^{-4}    \\ 
				2 & 5295,8503|1398,1399,867,9846 & 3.9\times 10^{-1}      & 22 & 25,25759,613 & 6.0\times 10^{-4}  \\ 
				3 & 2885|25,613,9846& 9.5\times 10^{-1}                     & 23 & 25,4609,613 &6.0\times 10^{-4}  \\ 
				4 &  207|5295,8503   & 9.2\times 10^{-2}                    & 24 & 1147,207,3551 & 5.7\times10^{-1}\\ 
				5 & 6776|25,613   & 2.4 \times 10^{-3}                       & 25 & 5295,8503|207 &8.2\times10^{-2} \\ 
				6 &  6777|25,613 & 9.3\times 10^{-1}
				& 26 & 1398,1399,867,9846|5295,8503 &9.3\times 10^{-1}\\ 
				7 & 25759|25,613 & 8.4 \times 10^{-1}                      & 27 & 25,613|1398,1399,867,9846 &  6.0\times 10^{-4} \\ 
				8 &  4609|25,613 & 1.7\times 10^{-1}
				& 28 & 207,4193 &4.4\times 10 ^{-1}\\ 
				9 & 1147,3551|207 & 6.2\times 10^{-1}                     & 29 & 207,5295,8503  &8.4\times 10^{-2}  \\ 
				10& 4790,4792|1147,3551  & 1.3 \times 10^{-2}               & 30 & 1147,3551,4790,4792 & 5.0\times 10^{-2}\\ 
				11 & 6654,6655|2885  &3.6\times 10 ^{-1}                    & 31 &207|1147,3551 &4.4\times 10^{-1}\\ 
				12 &3265,3845,4893|6654,6655  & 9.8\times 10^{-1}          & 32 &3265,3845,4893,6654,6655 &8.8\times 10^{-1}\\ 
				13 & 369|3265,3845,4893  & 5.6\times 10^{-1}              & 33 &2885|6654,6655   &  9.6\times10^{-1}\\ 
				14 & 5894|3265,3845,4893   & 5.1\times10^{-1}             & 34 &       25,613,9846|2885  & 6.0\times 10^{-4}\\ 
				15 & 4193|207 & 3.3\times10^{-2}                          & 35 & 3265,3845,4893,5894 & 6.5\times 10^{-3} \\ 
				16 & 7157|4193 & 1.4\times 10^{-1}                         & 36 & 6654,6655|3265,3845,4893 & 9.2 \times 10^{-1} \\ 
				17 & 25,2885,613,9846 & 6.0\times 10^{-4}                  & 37 & 3265,369,3845,4893 & 6.8\times 10^{-1} \\ 
				18 & 1398,1399,867|25,613,9846 & 4.8\times 10^{-1}         & 38 &4193,7157&  1.3\times 10^{-2}\\ 
				19 & 25,613,6776 & 6.0\times 10^{-4}                        & 39 & 207|4193 & 4.4\times10^{-1}\\ 
				20 & 1398,1399,867,9846|25,613 &  3.6\times 10^{-1}         & 40 & 1398,1399,5295,8503,867,9846& 8.0\times 10^{-1}\\ 
				&  &                                           & 41 & 2885,6654,6655 & 5.4\times 10^{-1}\\ 
				\hline
			\end{tabular}}
		\end{table}

\end{document}